\newtheorem{definition}{Definition}
\newtheorem{theorem}{Theorem}
\newtheorem{proposition}{Proposition}
\newtheorem{conjecture}{Conjecture}
\newtheorem{example}{Example}
\begin{document}


\title{Random Serial Dictatorship versus Probabilistic Serial Rule: A Tale of Two Random Mechanisms}

\author{
Hadi Hosseini, Kate Larson, Robin Cohen\\
Cheriton School of Computer Science\\
University of Waterloo\\
\{h5hossei,klarson,rcohen\}@uwaterloo.ca
}
\date{\today}
\maketitle

\begin{abstract}

For assignment problems where agents, specifying ordinal preferences, are allocated indivisible objects, two widely studied randomized mechanisms are the Random Serial Dictatorship (RSD) and Probabilistic Serial Rule (PS). These two mechanisms both have desirable economic and computational properties, but the outcomes they induce can be incomparable in many instances, thus creating challenges in deciding which mechanism to adopt in practice.
In this paper we first look at the space of lexicographic preferences and show that, as opposed to the general preference domain, RSD satisfies envyfreeness. Moreover, we show that although under lexicographic preferences PS is strategyproof when the number of objects is less than or equal agents, it is strictly manipulable when there are more objects than agents.
In the space of general preferences, we provide empirical results on the (in)comparability of RSD and PS, analyze economic properties, and provide further insights on the applicability of each mechanism in different application domains.

\end{abstract}








\section{Introduction}

The problem of assigning a number of indivisible objects to a set of agents, each with a private preference ordering over the objects, in the absence of monetary transfer, is fundamental in many multiagent resource allocation applications, and has been the center of attention amongst researchers at the interface of artificial intelligence, economics, and mechanism design.
Assigning dormitory rooms or offices to students, students to public schools, college courses to students, teaching load to faculty, organs and medical resources to patients, members to subcommittees, etc. are some of the myriad examples of one-sided matching problems~\cite{roth2004kidney,sonmez2010course,sonmez2010house,ashlagi2013mix,budish2012multi,manlove2013algorithmics}.

%

Two important (randomized) matching mechanisms are \emph{Random Serial Dictatorship} (RSD)~\cite{abdulkadirouglu1998random} and \emph{Probabilistic Serial Rule} (PS)~\cite{bogomolnaia2001new}. Both mechanisms have important economic properties and are practical to implement.
The RSD mechanism has strong truthful incentives but guarantees neither efficiency nor envyfreeness. PS satisfies efficiency and envyfreeness, however, it is susceptible to manipulation. 


Therefore, since both RSD and PS induce random assignments, there are subtle points to be considered when deciding which mechanism to use. For example, given a particular preference profile, the mechanisms often produce random assignments which are simply incomparable and thus, without additional knowledge of the underlying utility models of the agents, it is difficult to determine which is the ``better'' outcome. 
Furthermore, properties like non-manipulability and envyfreeness can depend on whether there is underlying structure in the preferences, and even in general preference models it is valuable to understand under what conditions a mechanism is likely to be non-manipulable or envyfree as this can guide designers choices.


In this paper, we consider the random assignment of $m$ objects to $n$ agents, where each agent reports her complete preferences as a strict ordering over objects. 
We first look at the space of lexicographic preferences and define two axioms for fairness based on partial preference orderings. We show that, as opposed to the general preference domain, RSD satisfies envyfreeness. Moreover, we show that although under lexicographic preferences PS is strategyproof when the number of objects is less than or equal agents, it is susceptible to manipulation when there are more objects than agents.

We empirically study the space of all possible preference profiles for various matching problems, and provide insights on the comparability of RSD and PS under the various agent-object combinations with the aim of providing practical insights on the different properties of RSD and PS.



\section{Relation to the literature}




In the house allocation problem with ordinal preferences, Svensson showed that serial dictatorship is the only deterministic mechanism that is strategyproof, nonbossy, and neutral~\cite{svensson1999strategy}. Abdulkadiroglu and Sonmez showed that uniform randomization over all serial dictatorship assignments, a.k.a RSD, is equivalent to the top trading cycle's core from random initial endowment, and thus satisfies stratgyeproofness, proportionality, and ex post efficiency~\cite{abdulkadirouglu1998random}. Although RSD is strategyproof and proportionally fair, but it does not guarantee envyfreeness.

Bogomolnaia and Moulin noted the inefficiency of the RSD mechanism from the ex ante perspective, and characterized random assignment mechanisms based on first order stochastic dominance (sd)~\cite{bogomolnaia2001new}. They further proposed the probabilistic serial mechanism as an efficient and envyfree mechanism. While PS is not strategyproof, it satisfies weak stratgyeproofness for problems with equal number of agents and objects.
Kojima studied random assignment of multiple indivisible objects and showed that PS is strictly manipulable (not weakly strategyproof) even when there are only two agents~\cite{kojima2009random}.
Kojima and Manea, later, considered a setting with multiple copies of each object and showed that in large assignment problems with sufficienctly many copies of each object, truthtelling is a weakly dominant strategy in PS~\cite{kojima2010incentives}.
In order to give a rationale for practical employment of RSD and PS in real-life applications, Che and Kojima further analyzed the economic properties of PS and RSD in large markets with multiple copies of each object and concluded that these mechanisms become equivalent when the market size becomes large~\cite{che2010asymptotic}. Particularly, they showed that inefficiency of RSD and manipulability of PS vanishes when the number of copies of each object approaches infinity.  

The practical implications of deploying RSD and PS have been the center of attention in many one-sided matching problems~\cite{abdulkadiroglu2009strategy}. 
In the school choice setting with multi-capacity alternatives, Pathak compared RSD and PS using data from the assignment of public schools in New York City and observed that many students obtained a more desirable random assignment through PS~\cite{pathak2006lotteries}. However, the efficiency difference was quite small. These observations further confirm Che and Kojima's equivalence result when there are multiple copies of each object (\textit{i.e.} school seats) available~\cite{che2010asymptotic}. 
Despite these findings for arbitrary large markets, the equivalence results of Che and Kojima, and its extension to all random mechanisms by Liu and Pycia~\cite{liu2013ordinal}, do not hold when the quantities of each object is limited to one~\cite{che2010asymptotic}.


%
This paper sets out to study the comparability of PS and RSD when there is only one copy of each object, and analyze the space of all preference profiles for different numbers of agents and objects. 
We make several intriguing observations about the manipulability of PS and fairness properties of RSD. Following Manea's work on asymptotic inefficiency of RSD~\cite{manea2009asymptotic}, we show that despite this inefficiency result, the fraction of random assignments at which PS stochastically dominates RSD vanishes when the number of agents is less than or equal to the available objects. Moreover, we show that this result strongly holds for lexicographic preferences when there is equal number of agents and objects.


\section{Formal Representation}

A one-sided matching problem $\langle N, M, \succ \rangle$ consists of a set of agents $N$, where $|N| = n$, a set of distinct indivisible objects $M$ with $|M| = m$, and a \emph{preference profile} $\succ$ denoting the set of strict and complete preference orderings of agents over the objects.
Let $\mathcal{P}$ denote the set of all complete and strict preferences over $M$. Each agent $i$ has a private preference ordering denoted by $\succ_{i}\in \mathcal{P}$, where $a\succ_{i} b$ indicates that agent $i$ prefers object $a$ over $b$. Thus, a \emph{preference profile} is $\succ = (\succ_{1},\ldots, \succ_{n}) \in \mathcal{P}^{n}$. 
We represent the preference ordering of agent $i$ by the ordered list of objects $\succ_{i} = a \succ_{i} b \succ_{i} c$ or $\succ_{i} = (abc)$, for short.

A random assignment is a stochastic $n\times m$ matrix $A$ that specifies the probability of assigning each object $j$ to each agent $i$. The probability vector $A_{i} = (A_{i,1}, \ldots, A_{i,m})$ denotes the random allocation of agent $i$, that is, 
\begin{equation*}
A = 
\begin{pmatrix}
       A_{1} \\
       A_{2} \\
       \vdots \\
       A_{n} 
\end{pmatrix}
=
\begin{pmatrix}
       A_{1,1} & A_{1,2} & \ldots & A_{1,m} \\
       A_{2,1} & A_{2,2} & \ldots & A_{2,m} \\
       \vdots & \vdots & \ddots & \vdots \\
       A_{n,1} & A_{n,2} & \ldots & A_{n,m} 
\end{pmatrix}
\end{equation*}

Let $\mathcal{A}$ refer to the set of possible assignments. 
An assignment $A \in \mathcal{A}$ is said to be \textit{feasible} if and only if $\forall i\in N, \sum_{j\in M} A_{i,j} = 1$, that is, the probability distribution function is valid for each object. 

Given random assignment $A_{i}$, the probability that agent $i$ is assigned an object that is at least as good as object $\ell$ is defined as follows
\begin{equation}
w(\succ_{i}, \ell, A_{i}) = \sum_{j\in M: j \succeq_{i} \ell} A_{i,j}
\end{equation}

A deterministic assignment is simply a binary matrix of degenerate lotteries over objects that allocates each object to exactly one agent with certainty.
Every random assignments is a convex combination of deterministic assignments and is induced by a lottery over deterministic assignments~\cite{von1953certain}. 

A \emph{matching mechanism} is a mapping $\mathcal{M}: \mathcal{P}^{n} \to \mathcal{A} $ from the set of possible preference profiles to the set of random assignments. 



\subsection{Properties}



In this section we define key properties that matching mechanisms should have. In particular, we formally define efficiency, strategyproofness and envyfreeness for (randomized) matching mechanisms.

In the context of deterministic assignments, an assignment $A_{i}$ \emph{Pareto dominates} another assignment $B_{i}$ at $\succ$ if $\exists i\in N$ such that $A_{i} \succ_{i} B_{i}$ and $\forall k\in N$ $A_{k} \succeq_{k} B_{k}$, where $A_{i} \succ_{i} B_{i}$ denotes that agent $i$ strictly prefers assignment $A_{i}$ over $B_{i}$. An assignment is \emph{Pareto efficient} at $\succ$ if no other assignment exists that Pareto dominates it at $\succ$.
Extending the Pareto efficiency requirement to random mechanisms, we focus only on mechanisms that always guarantee a Pareto efficient solution ex post.
A random assignment is called \emph{ex post efficient} if it can be represented as a probability distribution over deterministic Pareto efficient assignments.


To evaluate the quality of a random assignment, we follow the convention proposed by Bogomolnaia and Moulin based on first-order stochastic dominance~\cite{bogomolnaia2001new}.



\begin{definition}
Given a preference ordering $\succ_{i}$, random assignment $A_{i}$ \textbf{stochastically dominates} (sd) assignment $B_{i} (\neq A_{i})$ if 
\begin{equation}
\forall \ell \in M,\ w(\succ_{i}, \ell, A_{i}) \geq w(\succ_{i}, \ell, B_{i})
\end{equation}
\end{definition}

Stochastic dominance is a strong requirement. It implies that for the entire space of utility functions that is consistent with ordinal preferences, an agent's expected utility under $A$ is always greater than her expected utility under $B$. 

\begin{definition}
A random assignment is \textbf{sd-efficient} if for all agents, it is not stochastically dominated by any other random assignment.
\end{definition}


%

A matching mechanism is \emph{sd-efficient} if at all preference profiles $\succ \in \mathcal{P}^{n}$, for all agents $i\in N$, the induced random assignment is not stochastically dominated by any other assignment.
Intuitively, no other mechanism exists that \textit{all} agents strictly prefer its outcome to their current random assignments.

A matching mechanism is \emph{sd}-strategyproof if there exists no non-truthful preference ordering $\succ'_{i} \neq \succ_{i}$ that improves agent $i$'s random assignment. More formally, 


\begin{definition} 
Mechanism $\mathcal{M}$ is \textbf{sd-strategyproof} if at all preference profiles $\succ \in \mathcal{P}^{n}$, for all agents $i \in N$, and for any misreport $\succ'_{i} \in \mathcal{P}^{n}$, such that $A = \mathcal{M}(\succ)$ and $A' = \mathcal{M}(\succ'_{i},\succ_{-i})$, we have:
\begin{equation}
\forall \ell \in M,\ w(\succ_{i}, \ell, A_{i}) \geq w(\succ_{i}, \ell, A'_{i})
\end{equation}
\end{definition}

In the context of random mechanisms, \emph{sd}-strategyproofness is a strict requirement. It states that under any utility model consistent with the preference orderings, no agent can improve her expected outcome by misreporting.

We can also define a milder version of strategyproofness. A mechanism is \emph{weakly sd-strategyproof} if at all preference profiles, no agent can misreport her preferences and obtain a random assignment that strictly improves her assignment for the entire space of utility models consistent with her true preference ordering.

\begin{definition}
Mechanism $\mathcal{M}$ is \textbf{weakly sd-strategyproof} if for all preference profiles $\succ\in\mathcal{P}^{n}$, for any agent $i$ with misreport $\succ'_{i}$, where $A = \mathcal{M}(\succ)$ and $A'= \mathcal{M}(\succ'_{i},\succ_{-i})$, we have 
\begin{equation}
\exists \ell\in M, w(\succ_{i}, \ell, A_{i}) > w(\succ_{i}, \ell, A'_{i})
\end{equation}
\end{definition}

%

We say that a random assignment induced by mechanism $\mathcal{M}$ is \emph{manipulable} if it is not \emph{sd}-strategyproof. That is, there exists an agent $i$ with preference $\succ_{i}$ that her assignment under a truthful report does not stochastically dominate her assignment under a non-truthful report. Formally, assignment $A$ induced by mechanism $\mathcal{M}$ is \textbf{manipulable} at preference profile $\succ$ if there exists an agent $i\in N$ with misreport $\succ'_{i}$ such that if $A = \mathcal{M}(\succ)$ and $A'= \mathcal{M}(\succ'_{i},\succ_{-i})$, we have $\exists \ell \in M,\ w(\succ_{i}, \ell, A'_{i}) > w(\succ_{i}, \ell, A_{i})$.


Intuitively, an assignment is manipulable if for some utility function consistent with the ordinal preferences, an agent's expected utility under a non-truthful report improves.

An assignment induced by a matching mechanism is strictly manipulable if the mechanism does not satisfy weak \emph{sd}-strategyproofness, i.e., the induced corresponding random assignment is \emph{sd}-manipulable. Thus, assignment $A$ induced by mechanism $\mathcal{M}$ is \textbf{\emph{sd}-manipulable} at preference profile $\succ$ if there exists an agent $i\in N$ with preference $\succ_{i}$, and a misreport $\succ'_{i}\in\mathcal{P}^{n}$ such that $A'_{i} = \mathcal{M}(\succ'_{i},\succ_{-i})$ stochastically dominates $A_{i} = \mathcal{M}(\succ)$, that is, $\forall \ell \in M,\ w(\succ_{i}, \ell, A'_{i}) \geq w(\succ_{i}, \ell, A_{i})$.


An \emph{sd}-manipulable assignment indicates that there exists an agent that can strictly benefit from misreporting. Clearly, a weakly \emph{sd}-strategyproof assignment is not \emph{sd}-manipulable.


To analyze the fairness properties of matching mechanisms, we study the ordinal notion of envyfreeness.
An assignment is \emph{sd}-envyfree if each agent strictly prefers her random allocation to any other agent's assignment, that is, given $\succ_{i}$ agent $i$'s random assignment stochastically dominates any other agent's assignment.


\begin{definition}
Given agent $i$'s preference $\succ_{i}$, assignment $A_{i}$ is \textbf{sd-envyfree} if for all agents $\forall k\neq i \in N$, 
\begin{equation}
\forall \ell \in M,\ w(\succ_{i}, \ell, A_{i}) \geq w(\succ_{i}, \ell, A_{k})
\end{equation}
\end{definition}

A matching mechanism satisfies \emph{sd}-envyfreeness if at all preference profiles $\succ \in \mathcal{P}^{n}$, it induces \emph{sd}-envyfree assignments for all agents.

An assignment is \textbf{weakly \emph{sd}-envyfree} if no agent strictly prefers another agent's random assignment to her own.  Formally, for all agents $i,k\in N$ with $A_i \neq A_{k}$, we have $\exists \ell \in M,\ w(\succ_{i}, \ell, A_{i}) > w(\succ_{i}, \ell, A_{k})$.


\section{Two Random Mechanisms}\label{sec:twoRandom}

In this section, we formally introduce two widely studied matching mechanisms, Random Serial Dictatorship~\cite{abdulkadirouglu1998random} and Probabilistic Serial rule~\cite{bogomolnaia2001new}.

Random Serial Dictatorship (RSD) is a uniform distribution over all possible (priority) orderings of agents, and for each realization of the orderings, the first agent receives her most preferred object, the next agent receives her most preferred object among the set of remaining objects, and so on until no object remains unassigned.\footnote{For $n < m$,  RSD requires a careful method for picking sequence at each realized priority ordering, which will directly affect the efficiency and envy properties of the assignments~\cite{BL-ECAI14,kalinowski2013strategic}. For simplicity, we assume that when $n < m$, at each priority ordering the first agent chooses $(m - n + 1)$ objects, and the rest of the agents choose one object each.}

\begin{definition}
Given a preference profile, the \textbf{Random Serial Dictatorship (RSD)} mechanism is a uniform distribution over the assignments induced by the set of all possible priority orderings over agents.
\end{definition}

Abdulkadiro{\u{g}}lu and S{\"o}nmez showed the equivalence of the random serial dictatorship and the core from the random endowment in the house allocation problem, and argued that this equivalence justifies the wide use of the RSD mechanism in many practical applications such as student housing, course allocating, etc.~\cite{abdulkadirouglu1998random}.


Bogmolnaia and Moulin proposed the Probabilistic Serial rule (PS)~\cite{bogomolnaia2001new}.
Given a preference profile, the PS mechanism treats objects as a set of divisible objects and simulates a Simultaneous Eating Algorithm (SEA): At every point in time, each agent starts eating from her top choice according to $\succ_{i}$ at the unit speed. 
When an object is completely exhausted (eaten away), each agent eats away from her most preferred objects among the remaining objects. The eating algorithm terminates when all objects are exhausted.


The random assignment of agent $i$, when SEA terminates, is the fraction of each object that has been eaten away by agent $i$. We adopt the following definition from Bogomolnaia and Moulin~\cite{bogomolnaia2001new}:

\begin{definition} Given a preference profile, the \textbf{Probabilistic Serial rule (PS)} is the random probability assignment by simulating the Simultaneous Eating Algorithm with uniform speed.
\end{definition}

For a given preference profile $\succ$, we let $PS(\succ) \in \mathcal{A}$ and $RSD(\succ)\in \mathcal{A}$ denote the outcomes of PS and RSD mechanisms respectively. 

In their seminal work, Bogomolnaia and Moulin characterized the economic properties of these two mechanisms and showed that RSD does not guarantee \emph{sd}-efficiency~\cite{bogomolnaia2001new}.
The following example, adopted from Bogomolnaia and Moulin (2001), illustrates the inefficiency of the RSD mechanism.
\begin{example}\label{example:rsd-noSD}
Suppose there are four agents $N = \{1,2,3,4\}$ and four objects $M = \{a,b,c,d\}$. Consider the following preference profile $\succ = ((abcd), (abcd), (badc), (badc))$.

\begin{table}[h]
\small
\tabcolsep=0.1cm 
 \begin{subtable}{0.49\linewidth}
 \centering
 \begin{tabular}{ccccc}
    \hline
     & $a$ & $b$ & $c$ & $d$\\ \hline \hline
    $A_1$ & $1/2$ & $0$ & $1/2$ & $0$ \\
    $A_2$ & $1/2$ & $0$ & $1/2$ & $0$ \\
    $A_3$ & $0$ & $1/2$ & $0$ & $1/2$ \\
    $A_4$ & $0$ & $1/2$ & $0$ & $1/2$ \\
    \hline
  \end{tabular}
  \caption{Assignment under $PS(\succ)$}
\end{subtable}
\begin{subtable}{0.49\linewidth}
\centering  
   \begin{tabular}{ccccc}
    \hline
     & $a$ & $b$ & $c$ & $d$\\ \hline \hline
    $A_1$ & $5/12$ & $1/12$  & $5/12$ & $1/12$  \\
    $A_2$ & $5/12$ & $1/12$  & $5/12$ & $1/12$  \\
    $A_3$ & $1/12$ & $5/12$ & $1/12$  & $5/12$ \\
    $A_4$ & $1/12$ & $5/12$ & $1/12$  & $5/12$ \\
    \hline
  \end{tabular}
  \caption{Assignment under $RSD(\succ)$}
\end{subtable}
\caption{Example showing the inefficiency of RSD}
\end{table}
In this example, all agents strictly prefer the assignment induced by PS over the RSD assignment. Thus, RSD is inefficient at this preference profile.
\end{example}



Table~\ref{my-label} summarizes the theoretical results from the literature for both RSD and PS. 
For $n \geq m$, PS satisfies \emph{sd}-efficiency, \emph{sd}-envyfreeness, and weakly \emph{sd}-strategyproofness, while RSD is \emph{sd}-strategyproof, and weakly \emph{sd}-envyfree, but it is not \emph{sd}-efficient nor \emph{sd}-envyfree~\cite{bogomolnaia2001new}. 
However, for $n < m$, PS is \emph{sd}-manipulable, i.e., PS is not even weakly \emph{sd}-strategyproof~\cite{kojima2010incentives}.

\begin{table}
\small
\centering
\begin{tabular}{@{}lllll@{}}
\cmidrule(l){2-5}
 & \multicolumn{2}{c}{$n \geq m$} & \multicolumn{2}{c}{$n < m$} \\ \cmidrule(l){2-5} 
						&      PS     &    RSD      &     PS      &    RSD      \\ \cmidrule(l){2-5}
\emph{sd}-strategyproof &      weak     &    \ding{51}      &     \ding{55}      &    \ding{51}  \\
\emph{sd}-efficiency	&     \ding{51}      &    \ding{55}      &     \ding{51}      &    \ding{55}      \\
\emph{sd}-envyfree		&     \ding{51}      &    weak      &    \ding{51}       &   weak      \\ \bottomrule
\end{tabular}\vspace{-.5em}
\caption{The summary of properties}
\label{my-label}
\end{table}

\subsection{The Incomparability of RSD and PS}

The theoretical properties of RSD and PS do not provide proper insight into the head-to-head comparison and applicability of these two mechanisms.


RSD does not guarantee \emph{sd}-efficiency, meaning that the assignments induced by RSD are not equivalent to the \emph{sd}-efficient assignments induced by PS.
However, in many instances of preference profiles the random assignments induced by PS and RSD are simply incomparable, in that neither assignment stochastically dominates the other one. The following example illustrates this subtle distinction even for $n=m=3$.
\begin{example}\label{example:non-comparable}
Suppose there are three agents $N = \{1,2,3\}$ and three objects $M = \{a,b,c\}$. Consider the following preference profile $\succ = ((acb), (abc), (bac))$.

\begin{table}
\small
\tabcolsep=0.1cm 
\begin{subtable}{0.49\linewidth}
\centering
 \begin{tabular}{ c c c c }
    \hline
     & $a$ & $b$ & $c$ \\ \hline \hline
    $A_{1}$ & $1/2$ & $0$ & $1/2$ \\
    $A_2$ & $1/2$ & $1/4$ & $1/4$ \\
    $A_3$ & $0$ & $3/4$ & $1/4$  \\
    \hline
  \end{tabular}
  \caption{Assignment under $PS(\succ)$}
\end{subtable}
\begin{subtable}{0.49\linewidth}
\centering
   \begin{tabular}{ c c c c }
    \hline
     & $a$ & $b$ & $c$ \\ \hline \hline
    $A_1$ & $1/2$ & $0$  & $1/2$ \\
    $A_2$ & $1/2$ & $1/6$  & $1/3$ \\
    $A_3$ & $0$ & $5/6$ & $1/6$ \\
    \hline
  \end{tabular}
  \caption{Assignment under $RSD(\succ)$}
\end{subtable}
\caption{Incomparability of RSD and PS}
\label{tab:incomparability}
\end{table}
As shown in Table~\ref{tab:incomparability}, neither mechanism provides an assignment that stochastically dominates the other: agent 1 receives the same allocation under both RSD and PS, agent 2 strictly prefers PS over RSD, and agent 3 strictly prefers RSD over PS.
Thus, the two assignments are incomparable in terms of sd-efficiency.
\end{example}

In such instances, the efficiency of the assignments is ambiguous with respect to ordinal preferences. Thus, the (ex ante) efficiency of the induced assignments is contingent on the underlying von Neumann-Morgenstern (vNM) utility functions.
%
%
%

Similarly, the envy of RSD and the manipulability of PS depend on the structure of preference profiles, and thus, a compelling question, that justifies the practical implications of deploying a matching mechanism, is to analyze the space of preference profiles to find the likelihood of inefficient, manipulable, or envious assignments under these mechanisms.




In the next sections, we focus attention on lexicographic preference, and discuss the properties of RSD and PS in this domain. Then, we empirically study the space of all preference profiles for various matching problems.

\section{Lexicographic Preferences}\label{sec:lex}


In many real-life scenarios, players have preferences that are lexicographic on alternatives or objects, for example political negotiations, voting problems, team standings in sports like Hockey, sequential screening process for hiring candidates or choosing alternatives, etc.~\cite{fraser1994ordinal,yaman2008democratic,fishburn1974lexicographic}.
Lexicographic preferences are present in various applications and have been extensively studied in artificial intelligence and multiagent systems as a means of assessing allocations based on ordinal preferences~\cite{domshlak2011preferences,saban2013note}.

Given two assignments, an agent prefers the one in which there is a higher probability for   getting the most-preferred object. 
Formally, given a preference ordering $\succ_{i} = o^1 \succ o^{2}\ldots \succ o^{m}$, agent $i$ prefers any allocation $A_{i}$ that assigns a higher probability to her top ranked object $p_{i,o^{1}}$ over any assignment $B_{i}$ with $B_{i,o^{1}} < A_{i,o^{1}}$, regardless of the assigned probabilities to all other objects.
Only when two assignments allocate the same probability to the top object will the agent considers the next-ranked object. 




\begin{definition}
Given agent $i$'s preference ordering $\succ_{i}$, assignment $A_{i}$ \textbf{lexicographically dominates (ld)} assignment $B_{i}$ if
\begin{gather}
\exists\ \ell \in M: A_{i,\ell} > B_{i, \ell}\ \wedge\ \forall j \succ_{i} \ell:  A_{i,j} = B_{i, j}.
\end{gather} 

 
\end{definition}

A matching mechanism is lexicographically efficient (\emph{ld-efficient}) if for all preference profiles its induced assignment is not lexicographically dominated by any other random assignment. 
By definition \emph{sd}-efficiency yields \emph{ld}-efficiency, however, an \emph{ld}-efficient assignment may not be \emph{sd}-efficient. The following example illustrates this:

\begin{example}\label{example:ld-domination}
Consider four agents $N = \{1,2,3,4\}$ and four objects $M = \{a,b,c,d\}$ at the following preference profile $\succ = ((cabd), (acdb), (cbda), (acbd))$. 
\begin{table}[b]
\small
\tabcolsep=0.1cm 
\begin{subtable}{0.49\linewidth}
 \centering
\begin{tabular}{ c c c c c}
    \hline
     & $a$ & $b$ & $c$ & $d$\\ \hline \hline
    $A_1$ & $0$ & $1/3$ & $1/2$ & $1/6$ \\
    $A_2$ & $1/2$ & $0$ & $0$ & $1/2$ \\
    $A_3$ & $0$ & $1/3$ & $1/2$ & $1/6$ \\
    $A_4$ & $1/2$ & $1/3$ & $0$ & $1/6$ \\
    \hline
  \end{tabular}
   \caption{$PS(\succ)$ assignment}
\end{subtable}
\begin{subtable}{0.49\linewidth}
 \centering
   \begin{tabular}{ c c c c c}
    \hline
     & $a$ & $b$ & $c$ & $d$\\ \hline \hline
    $A_1$ & $1/12$ & $1/3$  & $5/12$ & $1/6$  \\
    $A_2$ & $11/24$ & $0$  & $1/12$ & $11/24$  \\
    $A_3$ & $0$ & $5/12$ & $5/12$  & $1/6$ \\
    $A_4$ & $11/24$ & $1/4$ & $1/12$  & $5/24$ \\
    \hline
  \end{tabular}
   \caption{$RSD(\succ)$ assignment}
\end{subtable} 
\caption{An example showing PS dominating RSD lexicographically but not w.r.t stochastic dominance.}
\label{tab:ld}
\end{table}
Table~\ref{tab:ld} shows the assignments induced by PS and RSD.
Here, PS lexicographically dominates RSD since all the agents receive a higher probability of their more preferred objects under PS. However, PS does not stochastically dominate RSD because agent 2 (similarly agent 4) weakly prefers the RSD assignment as $w(\succ_{2}, c,RSD(\succ)) = \frac{11}{24} + \frac{1}{12} = \frac{13}{24}$ is greater than $w(\succ_{2}, c, PS(\succ)) = \frac{1}{2} + 0 = \frac{1}{2}$.
Thus, the two random assignments are in fact incomparable with respect to stochastic dominance.
\end{example}

\begin{proposition}
PS is ld-efficient, RSD is not.
\end{proposition}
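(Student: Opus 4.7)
The plan is to prove the two halves separately. For the second claim (RSD is not ld-efficient), I would simply invoke Example~\ref{example:ld-domination}: at the displayed profile a coordinate-by-coordinate check shows that under PS every one of the four agents receives strictly more probability at the most preferred object where $PS(\succ)$ and $RSD(\succ)$ differ. Hence $PS(\succ)$ Pareto ld-dominates $RSD(\succ)$ at that profile, and so RSD fails ld-efficiency in general.

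For the first claim (PS is ld-efficient), the natural plan is to combine two ingredients already at hand. First, PS is sd-efficient, a classical result of Bogomolnaia and Moulin. Second, as noted immediately before the proposition, sd-efficiency is tied to ld-efficiency. Concretely, suppose for contradiction some feasible $A$ Pareto ld-dominates $P = PS(\succ)$, with strict ld-dominance for an agent $i^{*}$ at first-difference rank $\ell^{*}$. The defining equalities $A_{i^{*},j} = P_{i^{*},j}$ for $j \succ_{i^{*}} \ell^{*}$ together with $A_{i^{*},\ell^{*}} > P_{i^{*},\ell^{*}}$ give $w(\succ_{i^{*}}, \ell, A_{i^{*}}) \ge w(\succ_{i^{*}}, \ell, P_{i^{*}})$ for every $\ell \succeq_{i^{*}} \ell^{*}$, with strict inequality at $\ell^{*}$. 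One then extends this comparison to ranks below $\ell^{*}$ using feasibility ($\sum_{i} A_{i,j} = 1$) combined with the SEA schedule defining $P$, obtaining per-agent sd-dominance of $A$ over $P$ for each $i$. Pareto sd-dominance of $A$ over $P$ then contradicts the sd-efficiency of PS, completing the argument.

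The main obstacle is precisely extending the upper-rank comparison to ranks below $\ell^{*}$. Per-agent, ld-dominance does not in general imply sd-dominance — Example~\ref{example:ld-domination} itself witnesses this, with $PS(\succ)$ ld-dominating $RSD(\succ)$ for every agent without sd-dominating — so one cannot argue purely locally. The leverage comes from $P$ being the PS outcome: SEA exhausts probability along each agent's upper contour set in lockstep with the consumption times of the items, so any mass that $A$ shifts onto $\ell^{*}$ for $i^{*}$ must be stolen from some agent $i^{\prime}$ whose weak ld-dominance then pins her own first-difference rank to an item strictly above the stolen one. Propagating this constraint along the order in which SEA exhausts objects (for instance by induction on the ordered exhaustion times) is the technically delicate ingredient that turns the local ld-comparison into a global sd-comparison and closes the contradiction.
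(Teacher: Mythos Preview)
Your treatment of the two halves differs from the paper's in both the witness chosen and the depth of argument.

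For RSD, the paper invokes Example~\ref{example:rsd-noSD} (the classical Bogomolnaia--Moulin four-agent profile), where $PS(\succ)$ in fact sd-dominates $RSD(\succ)$ and therefore ld-dominates it as well. You instead invoke Example~\ref{example:ld-domination}, where $PS(\succ)$ ld-dominates $RSD(\succ)$ without sd-dominating it. Both examples are valid witnesses to the failure of ld-efficiency for RSD; your choice merely isolates the lexicographic failure from the stochastic one.

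For PS, the paper's proof is the single sentence ``the ld-efficiency of PS is directly derived from its sd-efficiency,'' leaning on the assertion stated just before the proposition that sd-efficiency yields ld-efficiency. Your route is far more elaborate: assuming some feasible $A$ Pareto-ld-dominates $P = PS(\succ)$, you attempt to upgrade this to Pareto-sd-dominance via the SEA exhaustion schedule, and then invoke the known sd-efficiency of PS for a contradiction. Your caution here is well placed: at the level of individual allocations it is sd-dominance that implies ld-dominance, not the converse, so the contrapositive naturally gives ld-efficiency $\Rightarrow$ sd-efficiency rather than the direction the paper claims ``by definition.'' What your approach buys is an honest engagement with why the implication is not a triviality; what it costs is that the central step---propagating the first-difference comparison downward along the SEA exhaustion order to obtain per-agent sd-dominance---is only sketched, and you yourself flag it as the delicate part. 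The paper, by contrast, simply takes the implication as given.
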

\begin{proof}
The \emph{ld}-efficiency of PS is directly derived from its \emph{sd}-efficiency. Example \ref{example:rsd-noSD} illustrates that RSD is not \emph{ld}-efficient.
\end{proof}

\emph{Sd}-efficiency implies \emph{ld}-efficiency. However, similar to stochastic dominance relations, in terms of matching mechanisms it is unclear whether PS or RSD are comparable in terms of \emph{ld}-efficiency, in particular at those instances of preference profiles that  they are non-comparable under stochastic dominance.



\subsection{Envyfreeness}

In many practical situations, such as political negotiations, hiring candidates, and sports standings~\cite{fraser1994ordinal,fishburn1974lexicographic}, players most often associate higher regard to their top choices, and assess the fairness of random outcomes according to \emph{ld}-envyfreeness.

\begin{definition}
Given agent $i$'s preference $\succ_{i}$, assignment $A_{i}$ is \textbf{ld-envyfree} if there exists no agent-object pair $k\in N,\ell\in M$ such that,
$
A_{k,\ell} > A_{i, \ell}\ \wedge\ \forall j \succ_{i} \ell:  A_{i,j} = A_{k,j}
$.
\end{definition}

A matching mechanism satisfies \emph{ld}-envyfreeness if at all preference profiles $\succ\in\mathcal{P}^{n}$ it induces \emph{ld}-envyfree assignments for all agents.
\emph{Sd}-envyfreeness leads to \emph{ld}-envyfreeness; however, \emph{ld}-envyfree allocations may not satisfy \emph{sd}-envyfreeness. 

\begin{example}  
In Example \ref{example:non-comparable}, under RSD no agent is ld-envious of any other agent. However, agent 2 is weakly envious of agent 3's assignment. Thus, the assignment is not sd-envyfree. 
Assume agent 2's utility for the objects is $u_{2}(a) > u_{2}(b) > u_{2}(c)$. Agent 2's utility for objects as good as object $b$ under her assignment and agent 3's assignment can be written as follows
$
\frac{5}{6} u_{2}(b) > \frac{3}{6} u_{2}(a) + \frac{1}{6} u_{2}(b)
$. 
Thus, for all utility functions at which $u_{2}(b) > \frac{3}{4} u_{2}(a)$, agent 2 prefers agent 3's random assignment.
\end{example}

Based on the definition of stochastic dominance, it is easy to see that every \emph{sd}-envyfree mechanism satisfies \emph{ld}-envyfreeness. PS satisfies \emph{sd}-envyfreeness,  hence, it is \emph{ld}-envyfree for all preference profiles. However, \emph{ld}-envyfreeness is stronger than weak \emph{sd}-envyfreeness.

\begin{proposition}\label{prop:ld-inclusion}
Every ld-envyfree allocation is weakly sd-envyfree, that is, ld-envyfree $\subset$ weakly sd-envyfree.
\end{proposition}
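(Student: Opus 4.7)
The plan is to unpack both definitions on an arbitrary pair of agents. Fix an ld-envyfree allocation $A$ and take any $i,k \in N$ with $A_i \neq A_k$. I must produce an object $\ell^{*}$ such that $w(\succ_i, \ell^{*}, A_i) > w(\succ_i, \ell^{*}, A_k)$, which is exactly the weak sd-envyfree witness. The key observation driving the whole argument is that lexicographic comparisons are governed entirely by the highest-ranked position of disagreement, so I would choose $\ell^{*}$ as the $\succ_i$-maximal element of the (nonempty) set $\{j \in M : A_{i,j} \neq A_{k,j}\}$. By maximality, $A_{i,j} = A_{k,j}$ for every $j \succ_i \ell^{*}$, while $A_{i,\ell^{*}} \neq A_{k,\ell^{*}}$.

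Next I would use the hypothesis. With all higher-ranked probabilities already matched, the only remaining way agent $i$ could be ld-envious of agent $k$ using $\ell^{*}$ as the witness object is if $A_{k,\ell^{*}} > A_{i,\ell^{*}}$. Ld-envyfreeness forbids this, so $A_{i,\ell^{*}} \geq A_{k,\ell^{*}}$, and since the two values disagree the inequality is strict: $A_{i,\ell^{*}} > A_{k,\ell^{*}}$. From here the telescoping is immediate,
\begin{equation*}
w(\succ_i, \ell^{*}, A_i) - w(\succ_i, \ell^{*}, A_k) = \sum_{j \succ_i \ell^{*}} \bigl( A_{i,j} - A_{k,j} \bigr) + \bigl( A_{i,\ell^{*}} - A_{k,\ell^{*}} \bigr) = A_{i,\ell^{*}} - A_{k,\ell^{*}} > 0,
\end{equation*}
since every term in the sum vanishes by the choice of $\ell^{*}$. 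This produces the required $\ell^{*}$ and gives weak sd-envyfreeness of $A$.

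I do not foresee any real obstacle: the argument is essentially a definitional unpacking, and the only genuine step is the maximality selection of $\ell^{*}$, which converts the global statement into a single-coordinate comparison. The mildly delicate part is being careful that $\ell^{*}$ is well-defined precisely because $A_i \neq A_k$, which is why the hypothesis of the weak sd-envyfree condition already excludes the trivial case. If the $\subset$ in the statement is meant strictly, I would close by exhibiting a small allocation in which agent $i$ has $A_{i,a} = A_{k,a}$ on her top object but strictly less mass on her second-ranked object, compensated by extra mass further down the ranking: such an $A$ fails ld-envyfreeness at the second-ranked object yet still furnishes a strict $w$-improvement at a lower cumulative level, so it witnesses weak sd-envyfree $\setminus$ ld-envyfree.
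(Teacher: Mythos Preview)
Your argument for the inclusion is correct and is the natural one; the paper actually states this proposition without proof, so there is nothing to compare against for the main claim.

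For the strict inclusion, the paper supplies a concrete witness immediately after the proposition: $\succ_1=\succ_2=(abc)$ with $A_1=(\tfrac{1}{6},\tfrac{5}{6},0)$ and $A_2=(\tfrac{4}{6},\tfrac{1}{6},\tfrac{1}{6})$, where the disagreement already occurs at the top object. Your sketch---tying on the top object and losing on the second---is also viable, but note that with only three objects and feasible (row-sum $1$) vectors it fails: the cumulative sums for agent $i$ are then $p$, $p+q$, $1$ versus $p$, $p+q'$, $1$, and no level gives $i$ a strict advantage. You need at least four objects so that the compensating mass ``further down'' can create a strict cumulative lead before everything collapses to $1$ at the bottom.
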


The inclusion in Proposition \ref{prop:ld-inclusion} is strict: consider two agents with $\succ_{1}=\succ_{2}=(abc)$. The random assignments $A_{1}=(\frac{1}{6},\frac{5}{6},0)$ and $A_{2}=(\frac{4}{6},\frac{1}{6},\frac{1}{6})$ are weakly \emph{sd}-envyfree, but do not satisfy \emph{ld}-envyfreeness.
The relationship between the different definitions of fairness is as follows:\footnote{For a complete discussion on the computational complexity of finding or verifying the fairness concepts with respect to envyfreeness and proportionality see \cite{Aziz:2014}.} 
\begin{figure}[h!]
\centering
\includegraphics[scale=.9]{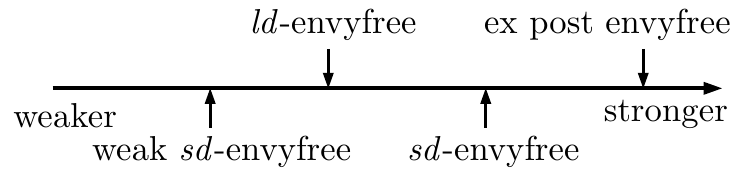}
\end{figure}


Before analyzing the envyfreeness of RSD in the lexicographic domain, we need to define a simple axiom called \emph{downward partial symmetry} (DPS). Intuitively, DPS is an extension to the fairness notion of equity (a.k.a proportionality). DPS states that starting from the first-ranked objects downwards, all agents with identical partial preferences receive exactly the same random assignment of the objects in the partial ordering.

\begin{definition}
Let $\succ_{i}^{\ell}$ denote the partial preference of agent $i$, such that $\succ_{i}^{\ell}= o^{1} \succ_{i} \ldots \succ_{i} o^{\ell}$.
A random assignment satisfies \textbf{downward partial symmetry (DPS)} if for all agent $i,k \in N$, 
\begin{equation}
\forall \ell~ \text{where}~ \succ_{i}^{\ell} = \succ_{k}^{\ell}~\Leftrightarrow~\text{for all}~ j \succeq_{i} \ell, A_{i,j} = A_{k,j}
\end{equation}
\end{definition}

Similarly, a random assignment satisfies \textbf{upward partial symmetry (UPS)} if starting from the \emph{least favorite} objects, agents with identical partial preferences receive exactly the same random assignment of the objects in the partial ordering.
It is easy to see that PS satisfies both UPS and DPS. However, RSD only satisfies DPS.

\begin{proposition}\label{prop:DPS}
RSD satisfies DPS but does not guarantee UPS.
\end{proposition}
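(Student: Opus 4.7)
For the DPS claim, the plan is a swap argument on priority orderings. Fix two agents $i, k$ with $\succ_i^\ell = \succ_k^\ell$, and let $S = \{o^1,\ldots,o^\ell\}$ denote their common top-$\ell$ set with matching internal ranking. Define an involution $\sigma$ on priority orderings by exchanging the positions of $i$ and $k$; because $\sigma$ is a bijection, it preserves the uniform measure underlying RSD. The core lemma to prove is an exchange property: for every priority ordering $\pi$, the element of $S$ that $i$ receives under serial dictatorship at $\pi$ equals the element of $S$ that $k$ receives at $\sigma(\pi)$ (with ``no object from $S$'' as an allowed common value). Averaging the resulting identity $\mathbf{1}[i \text{ gets } o^j \text{ at } \pi] = \mathbf{1}[k \text{ gets } o^j \text{ at } \sigma(\pi)]$ over uniform $\pi$ and dividing by $n!$ then yields $A_{i,o^j} = A_{k,o^j}$ for every $j \le \ell$, which is exactly DPS.

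The exchange property reduces to a short case split at the first of the two positions $p < q$ occupied by $i$ and $k$ in $\pi$. Every agent preceding position $p$ is identical in $\pi$ and $\sigma(\pi)$ and observes the same state, so the available set $A_p$ is the same in both orderings. Because $\succ_i$ and $\succ_k$ agree on $S$, both agents select the same first-available element of $S$ (call it $o^{j^\star}$) whenever $A_p \cap S \neq \emptyset$; otherwise all of $S$ has already been consumed before position $p$ in both orderings, and neither agent can draw from $S$ thereafter. In the former subcase the post-pick state is identical, the intermediate agents at positions $p+1,\ldots,q-1$ make identical picks, and the same reasoning at position $q$ completes the exchange. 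The footnote convention for $n < m$ (first-position agent takes $m - n + 1$ objects) is absorbed with a brief check: if $i$ or $k$ sits at position $1$, that agent's haul is a superset of $S$ while the other takes nothing from $S$, so the swap again exchanges the $S$-portion.

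To refute UPS I would give a compact counterexample with $N = \{1,2,3\}$, $M = \{a,b,c\}$, and $\succ_1 = (a,b,c)$, $\succ_2 = (b,a,c)$, $\succ_3 = (a,c,b)$. Agents $1$ and $2$ share the same bottom-$1$ partial preference ($c$ is least-preferred for both), so UPS would demand $A_{1,c} = A_{2,c}$. Enumerating the six equally likely priority orderings, agent $1$ receives $c$ exactly in orderings $(2,3,1)$ and $(3,2,1)$, for probability $\tfrac{1}{3}$, while agent $2$ receives $c$ only in $(3,1,2)$, for probability $\tfrac{1}{6}$; hence $A_{1,c} \neq A_{2,c}$ and UPS fails. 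The main obstacle is the DPS exchange property, specifically keeping the $S$-restricted view straight through both the $n \ge m$ and $n < m$ picking rules; the UPS direction is a routine enumeration on a small instance.
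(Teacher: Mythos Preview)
Your DPS argument via the swap involution is correct and cleaner than the paper's. The paper argues by contradiction: assuming $\succ_i^\ell = \succ_k^\ell$ yet $A_{i,\ell'} < A_{k,\ell'}$ for some $\ell' \succeq_i \ell$, it asserts this forces $A_{i,j} > A_{k,j}$ for some earlier $j$ and then ``continues by induction backward'' to a contradiction, with the details left largely informal. Your coupling, by contrast, establishes $A_{i,o^j} = A_{k,o^j}$ directly by pairing each priority ordering with its swap and verifying a pointwise exchange of the $S$-picks; averaging over the uniform measure finishes the job. The bijection route is more transparent and also makes clear that the conclusion depends only on the symmetry of the distribution over orderings, not on uniformity per se.

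One small slip in your $n<m$ check: you assert that the position-$1$ agent's haul is a \emph{superset} of $S$. That holds only when $m-n+1 \ge \ell$. When $m-n+1 < \ell$, the position-$1$ agent selects exactly $o^1,\ldots,o^{m-n+1}$, all of which lie inside $S$; since $i$ and $k$ agree on these, the pick is identical in $\pi$ and $\sigma(\pi)$, the post-pick states coincide exactly, and your main case analysis applies from position $2$ onward. Either way the exchange property survives, so this is a wording issue rather than a gap.

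For UPS, the paper simply points to its four-agent example (preferences $(cabd),(acdb),(cbda),(acbd)$), where agents $1$ and $4$ share the bottom pair $b,d$ yet $A_{1,b}=\tfrac{1}{3}\neq\tfrac{1}{4}=A_{4,b}$ under RSD. Your three-agent instance is equally valid and slightly more economical.
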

\begin{proof}
In Example~\ref{example:ld-domination}, agent 1 and 4's assignments illustrate the non-existence of UPS.

RSD is a uniform distribution over the set of priority orderings, and thus, satisfies equal treatment of equals for full preference orderings, i.e., $\succ_{i} = \succ_{k}$ then $A_{i,\ell} = A_{k,\ell}, \forall \ell$. 
Assume for contradiction that for two agents $i,k\in N$ where $\succ_{i}^{\ell} = \succ_{k}^{\ell}$, $\exists\ \ell' \succeq_{i} \ell$ such that $A_{i,\ell'} \neq A_{k,\ell'}$.
For the sake of simplicity, let us assume $A_{i,\ell'} < A_{k,\ell'}$. 
For any serial dictatorship where $i$ precedes $k$ in the priority ordering, $i$ chooses an object $j$ such that $j\succ_{i} \ell'$. This immediately implies that $A_{i,j} > A_{k,j}$ for some $j \succ_{k} \ell'$. We can continue this by induction backward up to the first-ranked item, implying that there exists a priority ordering at which agent $i$ chooses a less preferred object, that is $\succ_{i}^{\ell} \neq \succ_{k}^{\ell}$, which contradicts the assumption.
Similarly we can show that if $\succ_{i}^{\ell} \neq \succ_{k}^{\ell}$, it is impossible to achieve $A_{i,j} = A_{k,j}, \forall j \succeq_{i} \ell$.
\end{proof}

The next theorem shows that although RSD is not \emph{sd}-envyfree, it satisfies the weaker notion of \emph{ld}-envyfreeness.

\begin{theorem}
RSD is ld-envyfree, for any $n$ and $m$.
\end{theorem}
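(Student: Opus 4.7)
The plan is to prove the contrapositive: assuming agents $i, k$ and an object $\ell$ satisfy $A_{i,j} = A_{k,j}$ for every $j \succ_i \ell$, I will show $A_{k,\ell} \leq A_{i,\ell}$, which rules out $i$'s ld-envy of $k$. Let $r$ be the rank of $\ell$ in $\succ_i$, and let $H$ denote the set of $r-1$ objects that $i$ ranks strictly above $\ell$. The opening step applies Proposition 2 (RSD satisfies DPS): the hypothesis $A_{i,j} = A_{k,j}$ for all $j \in H$ forces $\succ_i$ and $\succ_k$ to agree on their top $r-1$ elements, so $H$ is also $k$'s top $r-1$ set in the same order, and both agents rank every element of $H$ above $\ell$ and above every object outside $H$.

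The heart of the argument is a coupling on the uniform distribution over priority orderings. For each permutation $\pi$, let $\pi'$ denote the permutation obtained by swapping the positions of $i$ and $k$; this is an involution, and under RSD's uniform distribution $\Pr[\pi] = \Pr[\pi']$. The goal is the claim that whenever $k$ receives $\ell$ in the RSD execution of $\pi$, $i$ receives $\ell$ in the execution of $\pi'$. Summing indicators over all $\pi$ and using the bijection then yields $A_{k,\ell} \leq A_{i,\ell}$.

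To establish this coupling claim, I split on which of $i, k$ comes first in $\pi$. If $k$ precedes $i$, the agents and choices at all positions before $k$'s slot coincide in $\pi$ and $\pi'$, so $i$ (who replaces $k$ at that slot in $\pi'$) faces exactly the same remaining set $S$ that $k$ faced. Because $k$ picked $\ell$, every object $k$ prefers to $\ell$, including all of $H$, is absent from $S$; and since $\ell$ is $i$'s top choice outside $H$, $\ell$ is $i$'s top in $S$ and she picks it. If $i$ precedes $k$, I first observe that the set $S_i$ remaining at $i$'s slot must intersect $H$: otherwise $i$, by the same reasoning, would pick $\ell$ herself, leaving none for $k$ and contradicting our assumption. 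Given $S_i \cap H \neq \emptyset$, the identical top-$(r-1)$ preferences of $i$ and $k$ force them to pick the same top-$H$ element, so $\pi$ and $\pi'$ stay synchronized through the intermediate positions, and the previous sub-argument applied at $k$'s original slot shows $i$ picks $\ell$ in $\pi'$.

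The main obstacle is the second case, since the swap could a priori desynchronize the two executions right at $i$'s slot. The resolution is to exploit the downstream fact that $k$ will later pick $\ell$ as a backward constraint: combined with the DPS-derived agreement on top $r-1$ preferences, this forces $i$ and $k$ to make identical moves at $i$'s slot, preserving the coupling. The degenerate case $r=1$ (with $H = \emptyset$) poses no difficulty, as the ``$i$ precedes $k$'' branch becomes impossible and only the first branch contributes.
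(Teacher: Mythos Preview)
Your proof is correct and supplies considerably more detail than the paper's own argument. Both begin by invoking Proposition~2 (DPS); the paper then simply observes that $\succ_i^\ell \neq \succ_k^\ell$ and declares this ``implies that RSD is \emph{ld}-envyfree,'' leaving the crucial inequality $A_{k,\ell}\le A_{i,\ell}$ unjustified. Your coupling---pairing each priority ordering $\pi$ with its $(i,k)$-swap $\pi'$ and showing that ``$k$ receives $\ell$ in $\pi$'' forces ``$i$ receives $\ell$ in $\pi'$''---is exactly the missing monotonicity step, and your two-case analysis (splitting on which of $i,k$ comes first, and using the shared top $r-1$ to keep the two executions synchronized) carries it out cleanly. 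So the opening DPS step is shared, but the substantive part of your argument is genuinely different from, and more complete than, what the paper writes.

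One minor point worth making explicit: your phrasing (``pick the same top-$H$ element,'' ``$i$ picks $\ell$'') implicitly assumes each agent selects a single object at her turn, whereas the theorem covers all $n,m$ and the paper's footnote stipulates that for $n<m$ the first agent in a realization takes $m-n+1$ objects. The coupling still goes through---if $k$ is first and her top $m-n+1$ picks include $\ell$, then $m-n+1\ge r$ (since $\ell$ lies below $H$ for $k$), so $i$'s top $m-n+1$ also include $\ell$; if $i$ is first, the fact that $k$ later obtains $\ell$ forces $m-n+1<r$, so both $i$ (in $\pi$) and $k$ (in $\pi'$) select the identical top $m-n+1$ elements of $H$ and synchronization is preserved---but you should state this adjustment so that the ``for any $n$ and $m$'' clause is fully covered.
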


\begin{proof}
Assume for contradiction that there exists an agent $k\in N$ with random assignment $A_{k}$ that agent $i$ \emph{ld} prefers $A_{k}$ to her own $A_{i}$. Therefore, by definition of lexicographic dominance, there exists an object $\ell \in M$ such that RSD assigns higher probability to $k$, \textit{i.e.} $A_{i,\ell} < A_{k,\ell}$, and $\forall j\in M: j\succ_{i} \ell$, $A_{i,j} = A_{k,j}$. According to Proposition~\ref{prop:DPS} this is only possible when $\succ_{i}^{\ell} \neq \succ_{k}^{\ell}$, implying that RSD is \emph{ld}-envyfree.
\end{proof}

The \emph{ld}-envyfreeness of RSD is noteworthy: it shows that despite RSD is not \emph{sd}-envyfree, it satisfies envyfreeness with respect to lexicographic preferences.

\subsection{Strategyproofness}

When multiple objects can be allocated to each agent, manipulating the PS outcome may result in a stochastically dominant assignment. In fact, even for $n = 2$, the random assignment induced PS is \emph{sd}-efficient and \emph{sd}-envyfree but not weakly \emph{sd}-strategyproof (i.e. \emph{sd}-manipulable)~\cite{kojima2009random}.

\begin{example}\label{example:sd-manipulable}
Consider two agents and four objects with preferences $\succ = ((abcd), (bcad))$. Assume that agent 1 misreports her preference as $\succ'_{1} = (bacd)$. 
\begin{table}[h!]
\small
\tabcolsep=0.1cm
\centering
\begin{subtable}{0.49\linewidth}
 \centering
	\begin{tabular}{ c c c c c}
    \hline
     & $a$ & $b$ & $c$ & $d$\\ \hline \hline
    $A_1$ & $1$ & $0$ & $1/2$ & $1/2$ \\
    $A_2$ & $0$ & $1$ & $1/2$ & $1/2$ \\
    \hline
  \end{tabular}
  \caption{$PS(\succ_1, \succ_{2})$ assignment}
  \label{tab:truth}
\end{subtable}\hfill
\begin{subtable}{0.49\linewidth}
 \centering
   \begin{tabular}{ c c c c c}
    \hline
     & $a$ & $b$ & $c$ & $d$\\ \hline \hline
    $A_1$ & $1$ & $1/2$  & $0$ & $1/2$  \\
    $A_2$ & $0$ & $1/2$  & $1$ & $1/2$  \\
     \hline
  \end{tabular}
  \caption{$PS(\succ'_{1}, \succ_{2})$ assignment}
  \label{tab:nontruth}
\end{subtable}
\caption{Random assignments under~(a) truthful and~(b) non-truthful reports.}
\end{table}

Agent 1's allocation under the misreport stochastically dominates her assignment when reporting truthfully because for all objects $\ell$, $w(\succ_{1}, \ell, PS(\succ'_{1}, \succ_{2})) \geq w(\succ_{1}, \ell, PS(\succ_{1}, \succ_{2}))$.
\end{example}


The main intuition for the strong manipulablity of PS comes from the eating sequence of agents. Since agents can be allocated more than one object, the sequence in which agents choose to eat away the objects becomes crucial (for a discussion on the game-theoretic properties and complexity of manipulation in picking sequences see \cite{kalinowski2013strategic,BL-ECAI14,bouveret2011general}). 

It is apparent that \emph{sd}-manipulablity leads to \emph{ld}-manipulatiy, however, there are random assignments under PS that are \emph{ld}-manipulable but not \emph{sd}-manipulable, as demonstrated by the following inclusion relation: \emph{sd}-manipulable $\subset$ \emph{ld}-manipulable $\subset$ manipulable.

In the domain of divisible objects, PS is the only stochastic mechanism which is \emph{sd}-efficient, \emph{sd}-envyfree, and \emph{ld}-strategyproof on the lexicographic preference domain~\cite{schulman2012allocation}. However, this characterization is only valid for $n \geq m$.

\begin{proposition}
For $n < m$, PS is not ld-strategyproof nor sd-strategyproof.
\end{proposition}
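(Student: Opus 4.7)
The plan is short: exhibit a single sd-manipulation witness and invoke an sd-to-ld reduction. Since any sd-dominating allocation $A'$ of $A \neq A'$ is also ld-dominating --- if $\ell^*$ is the $\succ_i$-maximal coordinate at which $A'$ and $A$ disagree, then $A'_j = A_j$ for every $j \succ_i \ell^*$, so the sd inequality $w(\succ_i, \ell^*, A') \geq w(\succ_i, \ell^*, A)$ combined with equality at the coordinate immediately above $\ell^*$ forces $A'_{\ell^*} > A_{\ell^*}$, which is exactly the ld-dominance condition --- a single sd-manipulation witness refutes both sd- and ld-strategyproofness in the regime $n < m$. This is also reflected in the previously noted chain sd-manipulable $\subset$ ld-manipulable.

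For the witness I would directly cite Example~\ref{example:sd-manipulable}. At $(n, m) = (2, 4)$ with $\succ_1 = (abcd)$ and $\succ_2 = (bcad)$, agent~$1$'s misreport $\succ'_1 = (bacd)$ yields $A'_1 = (1, 1/2, 0, 1/2)$, which the example verifies sd-dominates the truthful $A_1 = (1, 0, 1/2, 1/2)$. Since $n = 2 < 4 = m$, and since sd-dominance implies ld-dominance by the reduction above, the same misreport witnesses failure of both sd- and ld-strategyproofness in the lexicographic domain.

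If the proposition is read as a claim for every pair $n < m$ rather than at a single instance, the main (minor) obstacle is constructing counterexamples parametrically. I would handle this by padding: for $m \geq n + 2$, introduce $n - 2$ auxiliary agents, each with a distinct dedicated top object appended to the bottoms of $\succ_1$, $\succ_2$, and $\succ'_1$, together with appropriately placed filler objects, so that during the Simultaneous Eating Algorithm each auxiliary agent consumes her private top alone during $[0,1]$ and never competes with agents~$1$ and~$2$ over $\{a, b, c, d\}$; the base two-agent eating dynamics on $\{a, b, c, d\}$ are thereby preserved in both the truthful and manipulated runs, so the base manipulation persists. A parallel padding of the analogous $(n, m) = (2, 3)$ instance $\succ_1 = (abc), \succ_2 = (bca), \succ'_1 = (bac)$ --- which a direct SEA calculation shows yields the sd-dominant $A'_1 = (1, 1/2, 0)$ over the truthful $A_1 = (1, 0, 1/2)$ --- handles the tight boundary $m = n + 1$. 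The delicate part of this extension is tuning the auxiliary agents' tail preferences so that no auxiliary eats into $\{a, b, c, d\}$ before agents~$1$ and~$2$ have finished their contested consumption there, which is a routine but non-trivial SEA bookkeeping exercise.
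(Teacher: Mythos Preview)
Your core approach is identical to the paper's: both simply invoke Example~\ref{example:sd-manipulable} (the $(n,m)=(2,4)$ instance) as the witness, and the paper's one-line proof reads ``Proof follows from Example~\ref{example:sd-manipulable}.'' Your explicit sd-to-ld reduction spells out what the paper only asserts via the inclusion \emph{sd}-manipulable $\subset$ \emph{ld}-manipulable, and your parametric padding for general $n<m$ (including the auxiliary $(2,3)$ base case) goes strictly beyond what the paper proves---the paper is content with a single instance and does not address the universal reading---but the extension you sketch is sound and a reasonable strengthening.
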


\begin{proof}
Proof follows from Example~\ref{example:sd-manipulable}.
\end{proof}

\emph{Ld}-strategyproofness is the direct consequence of a simple axiom called \emph{bounded invariance property}~\cite{bogomolnaia2012probabilistic}, that is, no agent is able to change her preference order for any object she likes less than $j$ such that her (probabilistic) allocation of $j$ improves. This axiom was first characterized by Bogomolnaia and Heo, where they showed that PS satisfies bounded invariance for $n=m$~\cite{bogomolnaia2012probabilistic}. However, since for $n < m$ each agent has an eating capacity of more than 1, PS no longer satisfies the bounded invariance property.

\begin{proposition}
For $n < m$, PS does not satisfy the bounded invariance property.
\end{proposition}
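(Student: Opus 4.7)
The plan is to exploit the contrapositive of the paper's own assertion that bounded invariance implies \emph{ld}-strategyproofness, combined with the inclusion \emph{sd}-manipulable $\subset$ \emph{ld}-manipulable stated in the preceding subsection. First I would reuse Example~\ref{example:sd-manipulable}: with $n=2$, $m=4$, $\succ_1 = (abcd)$, $\succ_2 = (bcad)$, and agent 1's misreport $\succ'_1 = (bacd)$, Tables~\ref{tab:truth} and~\ref{tab:nontruth} give agent 1's truthful and misreported allocations, which in the order of $\succ_1$ are $(1, 0, \tfrac{1}{2}, \tfrac{1}{2})$ and $(1, \tfrac{1}{2}, 0, \tfrac{1}{2})$ respectively.

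Next I would observe that these two allocation vectors agree at rank $1$ (probability $1$ for $a$) but the misreport strictly increases the rank-$2$ probability (of $b$) from $0$ to $\tfrac{1}{2}$; hence the misreport lexicographically dominates the truthful allocation, so PS fails \emph{ld}-strategyproofness on this profile. Applying the contrapositive of ``bounded invariance implies \emph{ld}-strategyproofness'' immediately gives the desired conclusion: PS does not satisfy bounded invariance whenever $n<m$. The structural reason, as the paper highlights in the preceding paragraph, is that eating capacity $m/n>1$ means agent 1's eating schedule continues past her true top item $a$, so the reported order of items below the top does influence her allocation at rank $2$, breaking the invariance that the $n=m$ regime enjoys.

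The main obstacle I anticipate is pedagogical: the misreport swaps $a$ and $b$, not a permutation strictly below the top, so a reader might question whether this literally fits ``changing preferences for an object less preferred than $j$'' in the informal statement of the property. I would address this by noting that routing the argument through \emph{ld}-strategyproofness is cleanest, and that the Bogomolnaia--Heo characterization implies bounded invariance must fail as soon as any lex-improving deviation exists, so this indirect route avoids having to construct a more delicate, permutation-only example.
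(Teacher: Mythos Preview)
The paper gives no formal proof of this proposition; it is simply asserted immediately after the sentence stating that \emph{ld}-strategyproofness is a direct consequence of bounded invariance, together with the preceding proposition (proved via Example~\ref{example:sd-manipulable}) that PS is not \emph{ld}-strategyproof when $n<m$. Your argument---exhibiting \emph{ld}-manipulability through Example~\ref{example:sd-manipulable} and then taking the contrapositive of the asserted implication---is exactly the reasoning the paper leaves implicit, so your proposal matches the paper's intended argument.
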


\section{Experimental Results}

The theoretical properties of PS and RSD, even in restricted domains such as lexicographic preferences, only provide limited insight into their practical applications. In particular, when deciding which mechanism to use in different settings, the incomparability of PS and RSD leaves us with an ambiguous choice. Thus, we examine the properties of RSD and PS in the space of all possible preference profiles.


The number of all possible preference profiles is super exponential $(m!)^{n}$. Thus, 
for each combination of $n$ agents and $m$ objects we considered 1,000 randomly generated instances by sampling from a uniform preference profile distribution. Thus, each data point is the average of 1,000 replications. For each preference profile, we ran both PS and RSD mechanisms and compared their outcomes (in terms of random assignments). 

A preliminary look at our empirical results illustrates the followings: when $m \leq 2, n \leq 3$, PS coincides exactly with RSD, which results in the best of the two mechanisms, i.e., both mechanisms are \emph{sd}-efficient, \emph{sd}-strategyproof, and \emph{sd}-envyfree.\footnote{This was first noted by Bogomolnaia and Moulin for $n = m = 2$~\cite{bogomolnaia2001new}.}
Moreover, when $m = 2$, for all $n$ PS is \emph{sd}-strategyproof (although the PS assignments are not necessarily equivalent to assignments induced by RSD), RSD is \emph{sd}-envyfree, and for most instances PS stochastically dominates RSD, particularly when $n \geq 4$. 


\subsection{Efficiency}

\begin{figure*}[h]
\centering
\begin{subfigure}[b]{0.49\textwidth}
				\includegraphics[width=\textwidth]{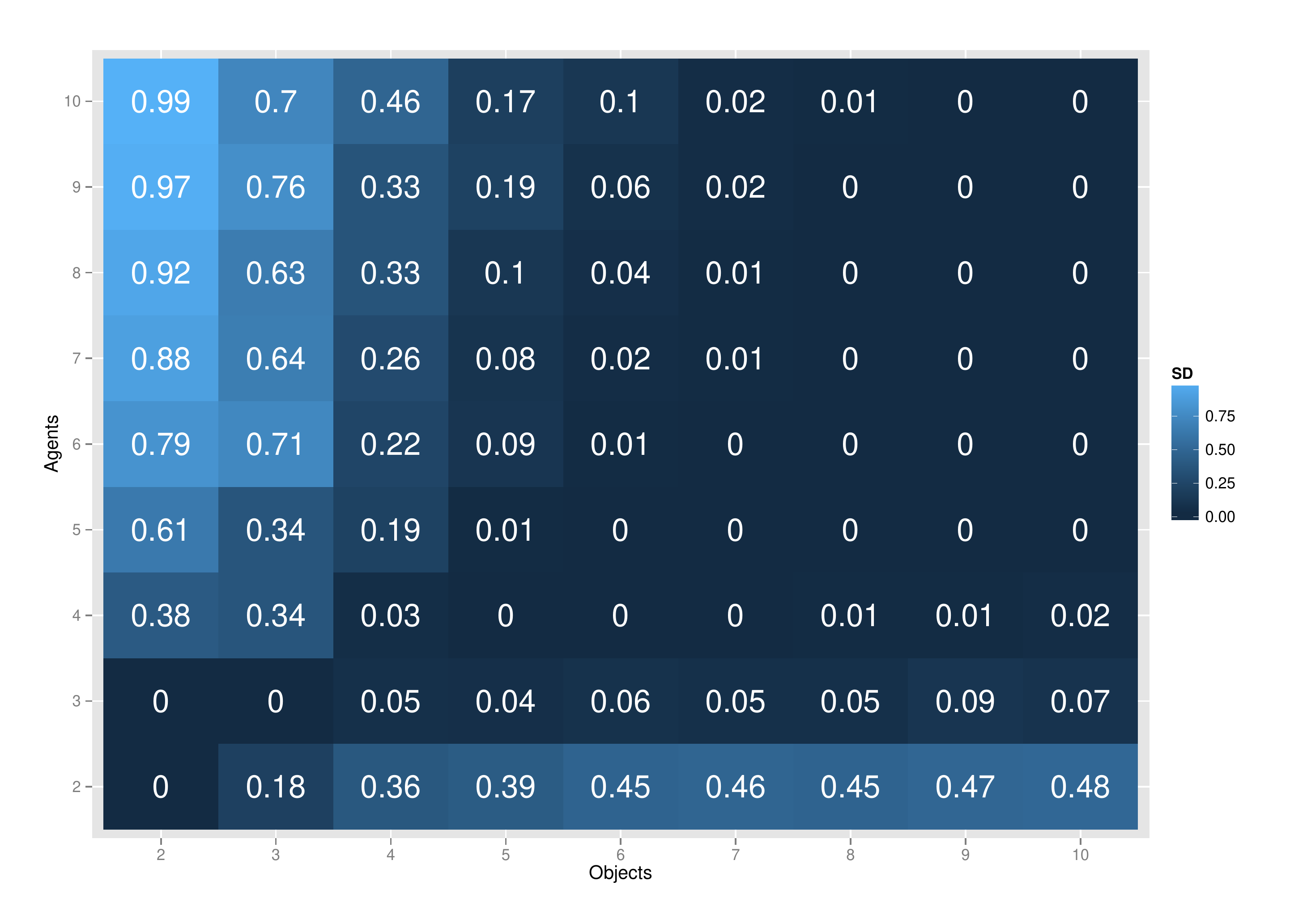}
                \caption{The fraction that PS stochastically dominates RSD.}
                \label{fig:sd-efficiency}
\end{subfigure}
\begin{subfigure}[b]{0.49\textwidth}
				\includegraphics[width=\textwidth]{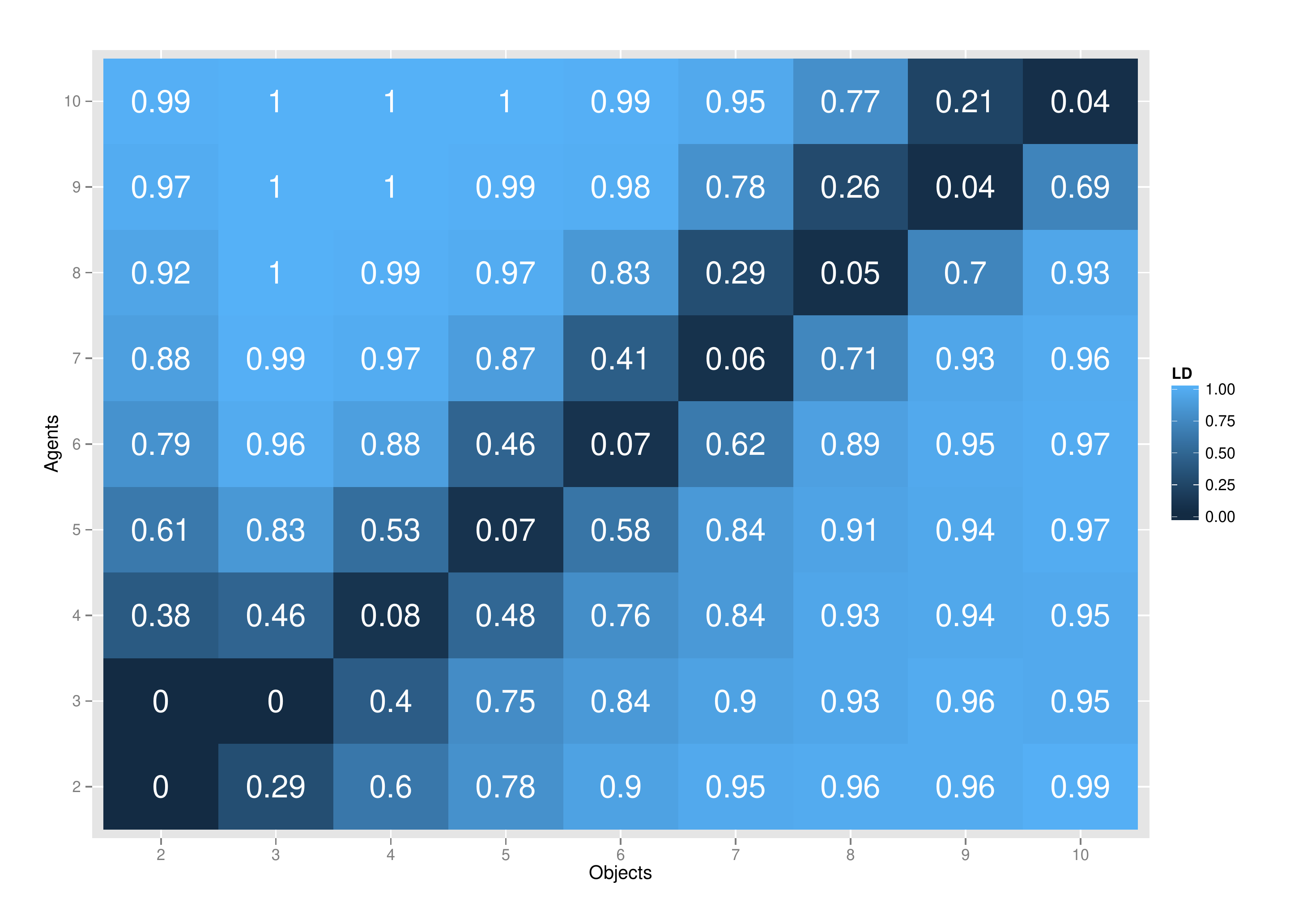}
                \caption{The fraction that PS lexicographically dominates RSD.}
                \label{fig:ld-efficiency}
\end{subfigure}
\caption{The fraction of preference profiles under which PS is superior to RSD.}
\label{Fig:efficiency}
\end{figure*}

Our first finding is that the fraction of preference profiles at which RSD and PS induce equivalent random assignments goes to $0$ when $n$ grows. 
There are two conclusions that one can draw from this observation. First, this result  confirms the theoretical results of Manea on asymptotic inefficiency of RSD~\cite{manea2009asymptotic}, in that, in most instances, the assignments induced by RSD are not equivalent to the PS assignments.
Second, this result also suggests that the incomparability of outcomes is significant, that is, the (ex ante) efficiency of the random outcomes is highly dependent on the underlying utility models. We propose the following conjecture.


\begin{conjecture}\label{con:SDtoZero}
The fraction of preference profiles $\succ\in \mathcal{P}^{n}$ for which RSD is stochastically dominated by PS at $\succ$ converges to zero as $\frac{n}{m} \to 1$.
\end{conjecture}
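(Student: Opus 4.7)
The plan is to prove the contrapositive: show that, as $n/m \to 1$, with probability tending to $1$ a uniformly random profile $\succ$ contains a ``witness'' to non-domination, i.e., a pair $(i,\ell)$ with $w(\succ_i,\ell,RSD(\succ)_i) > w(\succ_i,\ell,PS(\succ)_i)$. Any such pair rules out sd-domination of RSD by PS, so establishing the probability bound gives the conjecture.

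To find witnesses, I would compile a finite catalogue $\mathcal{L}$ of ``local'' preference patterns---sub-profiles on a constant number of agents and objects that, taken in isolation, force an RSD-over-PS gap for at least one agent. Example~\ref{example:non-comparable} already supplies one such pattern (three agents and three objects of the form $(acb),(abc),(bac)$, in which agent $3$ strictly prefers her RSD allocation to her PS allocation at object $b$), and symmetrising over agent/object labels and over which agent carries the strict gap should yield a list of size $O(1)$. A standard second-moment computation then shows that a uniformly random profile on $n$ agents and $m$ objects contains at least one occurrence of some pattern from $\mathcal{L}$ with probability tending to $1$: the expected number of ordered agent-object triples whose restricted preferences match a fixed pattern grows polynomially in $n$ (on the order of $\binom{n}{3}\binom{m}{3}/(3!)^3$), and disjoint triples are essentially independent, so the variance is controlled.

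The main obstacle, and the reason this claim is stated as a conjecture rather than a theorem, is that PS and RSD are highly non-local: an agent's marginal under PS depends on the full eating schedule, and RSD's marginals are averages over $n!$ orderings, so a gap computed on a sub-profile need not survive in the full profile---the other $n-3$ agents may redistribute probability mass in unpredictable ways. Overcoming this requires either a robustness lemma (for instance, restricting attention to patterns whose gap sits at an agent's top-ranked object, where both PS's eating dynamics and RSD's pick-first probabilities can be expressed cleanly in terms of only the top-choice sub-profile of the other agents) or a coupling argument bounding how far the restricted marginals can drift when the remaining preferences are arbitrary. I would pursue the robustness route first, since the asymmetry between PS's deterministic eating order and RSD's uniform averaging over permutations suggests that a carefully chosen witness pattern should yield a gap that depends only on the local structure up to $o(1)$ perturbations.
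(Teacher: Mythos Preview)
The paper does not prove this statement; it is explicitly labeled a conjecture and is supported only by the simulations summarised in Figure~\ref{fig:sd-efficiency}. There is thus no proof in the paper to compare your proposal against.

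Your framing is sound: to rule out ``PS sd-dominates RSD'' at a profile it suffices to exhibit one agent~$i$ and one object~$\ell$ with $w(\succ_i,\ell,RSD_i)>w(\succ_i,\ell,PS_i)$, and a second-moment count will certainly show that any fixed $O(1)$-sized preference pattern (such as the one from Example~\ref{example:non-comparable}) appears somewhere in a uniformly random profile with probability tending to~$1$. You are also right that this is not the difficult step.

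The gap you yourself flag is the real one, and your proposed remedy does not close it. Restricting the witness to an agent's top object does not make either mechanism local in the required sense. Under PS, the share agent~$i$ obtains of her top object~$a$ is not determined by the set of agents who rank~$a$ first: as soon as some other object is exhausted, additional agents may begin eating~$a$, and this cascade depends on the full profile. Under RSD, the probability that~$i$ receives~$a$ depends on the entire preferences of every agent who may precede her, not merely on their top choices. So neither quantity is a function of the ``top-choice sub-profile,'' and there is no evident decoupling that would let a three-agent gap survive up to $o(1)$ error when embedded in an $n$-agent profile. Proving such a perturbation bound would be the substantive content of a proof of the conjecture, and your outline does not supply it.

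In short, your proposal is a reasonable research plan and correctly isolates where the difficulty lies, but it is not a proof; the paper does not contain one either.
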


Our empirical results support Conjecture~\ref{con:SDtoZero} on comparability of RSD and PS. Figure~\ref{fig:sd-efficiency} shows that although RSD is inefficient when $n$ grows beyond $n > 5$, due to incomparability of RSD and PS with regard to stochastic dominance relation, the RSD induced assignments are not stochastically dominated by \emph{sd}-efficient assignments induced by PS.

We also see similar results when we restrict ourselves to lexicographic preferences (Figure~\ref{fig:ld-efficiency}).





\begin{conjecture}\label{con:LDtoZero}
The fraction of preference profiles $\succ\in \mathcal{P}^{n}$ for which RSD is lexicographically dominated by PS at $\succ$ converges to zero as $\frac{n}{m} \to 1$.
\end{conjecture}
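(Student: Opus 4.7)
The plan is to combine a reduction from Conjecture~\ref{con:SDtoZero} with a symmetry argument that handles the additional profiles where PS is lexicographically but not stochastically superior. As a preliminary, note that if $PS_i$ stochastically dominates $RSD_i$ and the two differ, then $PS_i$ also \emph{ld}-dominates $RSD_i$: let $j^\star$ be the top-ranked object where they disagree; since the more-preferred coordinates coincide, the cumulative constraint $w(\succ_i,j^\star,PS_i)\geq w(\succ_i,j^\star,RSD_i)$ reduces to $PS_{i,j^\star}\geq RSD_{i,j^\star}$, with strict inequality because $j^\star$ is a true disagreement. At the assignment level this yields $\{PS\text{ sd-dominates }RSD\}\subseteq\{PS\text{ ld-dominates }RSD\}$, so Conjecture~\ref{con:SDtoZero} is necessary but not sufficient; the extra mass consists of profiles like Example~\ref{example:ld-domination}, where PS ld-dominates RSD although the sd-comparison is inconclusive for some agents.

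For each agent $i$, define $j_i^\star(\succ)$ to be the top-ranked object at which $PS_i(\succ)$ and $RSD_i(\succ)$ disagree, or $\bot$ if the two allocations coincide. Then PS ld-dominates RSD exactly when, for every agent $i$, either $j_i^\star=\bot$ or $PS_{i,j_i^\star}>RSD_{i,j_i^\star}$, with strict inequality for at least one agent. I would try to exhibit, with probability tending to $1$ as $n/m\to 1$, a single ``witness'' agent $i$ for which $j_i^\star\neq\bot$ and $PS_{i,j_i^\star}<RSD_{i,j_i^\star}$, which already suffices to rule out ld-dominance. Under the uniform distribution on $\mathcal{P}^n$ the joint law is invariant under agent relabelings, so the signed gap $\Delta_i := PS_{i,j_i^\star}-RSD_{i,j_i^\star}$ is identically distributed across $i$. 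The intermediate claim needed is that $\Pr[\Delta_i<0\mid j_i^\star\neq\bot\text{ and balanced competition at }j_i^\star]\geq c$ for some absolute constant $c>0$. Heuristically this should hold because RSD's all-or-nothing lottery at each priority ordering has strictly higher per-object variance than PS's fractional smoothing, so under nontrivial competition the gap is genuinely two-sided. A union-bound-style estimate, together with a lower bound of $\Omega(n)$ on the number of agents in the conditioning event (which the analysis underlying Conjecture~\ref{con:SDtoZero} and Manea's asymptotic inefficiency result suggest grows linearly in $n$), would then yield probability $(1-c)^{\Omega(n)}\to 0$ of the ld-dominance event.

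The principal obstacle is that RSD allocations are strongly coupled across agents: a single draw of the priority ordering determines the entire assignment, so the events $\{\Delta_i<0\}$ are far from independent and a naive product bound over agents is not justified. Two possible remedies suggest themselves. The first is to replace independence with an FKG-type correlation inequality on the lattice of priority orderings, exploiting monotonicity of each agent's allocation in the priority and arguing that the failure events are positively correlated, so that the product bound on complements actually overcounts the good event. The second is a profile-rewriting map: for each profile on which PS ld-dominates RSD, construct $\Omega(n)$ neighboring profiles (obtained by altering a single target agent's preferences) in which the target becomes a witness, and then bound the preimage multiplicity of this map to transfer the density estimate. Either route ultimately requires a quantitative rate at which RSD's discrete lotteries diverge from PS's continuous consumption as $n$ grows with $n/m\to 1$ -- converting the qualitative disagreement visible in Example~\ref{example:rsd-noSD} and Example~\ref{example:ld-domination} into an explicit decay in $n$ -- which is the crux that keeps this statement a conjecture rather than a theorem.
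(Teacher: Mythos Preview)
There is nothing in the paper to compare your proposal against: the statement is labeled a \emph{conjecture}, and the Discussion section explicitly lists ``providing theoretical proofs for the efficiency relations in Conjectures~\ref{con:SDtoZero}, \ref{con:LDtoZero}, and \ref{con:lex-dominated}'' among the open problems. The only supporting evidence the paper offers is empirical (Figure~\ref{fig:ld-efficiency}), not a proof or even a proof sketch.

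Your write-up is honest about this --- you close by saying the missing quantitative decay rate ``is the crux that keeps this statement a conjecture rather than a theorem'' --- so you are not claiming more than the paper does. As a research plan it is reasonable, but a few points are worth flagging. First, your reduction relies on Conjecture~\ref{con:SDtoZero}, which is itself unproven in the paper, so even a successful execution of your outline would be conditional. Second, the ``single witness agent'' strategy needs more than the marginal law of $\Delta_i$: agent-exchangeability gives identical marginals, but the event you want to bound is $\bigcap_i\{\Delta_i\geq 0\text{ or }j_i^\star=\bot\}$, and neither of your two proposed remedies (FKG-type positive association on priority orderings, or a profile-rewriting multiplicity argument) is substantiated --- in particular, it is not clear that the failure events $\{\Delta_i<0\}$ are monotone in any natural lattice order on $S_n$, which an FKG application would require. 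Third, the appeal to Manea's asymptotic inefficiency result gives you that RSD and PS differ on a $1-o(1)$ fraction of profiles, but not that $\Omega(n)$ agents simultaneously sit in your ``balanced competition'' conditioning event; that linear lower bound is an additional unproved step. So the proposal is a plausible roadmap, but each of its three load-bearing lemmas is open, which is consistent with the paper's own assessment that the conjecture is nontrivial.
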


For lexicographic preferences, we also observe that the fraction of preference profiles for which PS assignments strictly dominate RSD-induced allocations goes to 1 when the number of agents and objects diverge.

\begin{conjecture}\label{con:lex-dominated}
The fraction of preference profiles $\succ\in \mathcal{P}^{n}$ for which RSD is lexicographically dominated by PS at $\succ$ converges to 1 as $|n - m|$ grows.
\end{conjecture}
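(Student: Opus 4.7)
My plan is to handle the two asymmetric regimes $m \gg n$ and $n \gg m$ separately, arguing in each that the set of preference profiles for which PS fails to ld-dominate RSD has vanishing measure under the uniform distribution on $\mathcal{P}^{n}$. The strategy is to locate, for each agent, her highest-ranked object at which the PS and RSD allocations disagree and to show that PS gives strictly more mass there (making equality at all strictly higher-ranked positions automatic). Lexicographic dominance then follows agent by agent and lifts to the full assignment.

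For $m \gg n$, the RSD convention described in the footnote of Section~\ref{sec:twoRandom} makes the first-picker grab $(m-n+1)$ objects, spreading an agent's RSD row across many rank positions. PS, by contrast, funnels each agent's unit capacity into her topmost choices. The first step is to show an asymptotic separation: for typical profiles, at the top rank position, PS gives mass close to $1$ while RSD gives noticeably less, due to the contribution from orderings in which the agent is not first and her top choice has already been taken. A counting argument, exploiting the randomness of $\succ$, converts this into a high-probability pointwise inequality $PS(\succ)_{i,o^{1}_{i}} > RSD(\succ)_{i,o^{1}_{i}}$ that holds for every $i$ as $m - n \to \infty$. The regime $n \gg m$ is dual: PS smooths competition at the most-demanded top objects, while RSD is all-or-nothing, assigning a unique object per priority ordering. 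When most agents share top choices with many others, PS gives diffuse top-ranked shares that RSD cannot match pointwise, and a concentration argument again delivers a top-position gap with probability tending to $1$.

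The main obstacle is simultaneity: ld-dominance of the full assignment requires \emph{every} agent's row to favor PS over RSD, so per-agent failure probabilities must aggregate to $o(1)$ over $n$ agents. A direct union bound works only if per-agent failure decays faster than $1/n$; otherwise one must exploit the exchangeability of the uniform preference distribution and the correlation structure between agents' outcomes in both mechanisms. A secondary obstacle is that for small $|n-m|$ the concentration phenomena are weak, which is compatible with Conjecture~\ref{con:LDtoZero} but means a separate finite-regime argument (or a careful monotonicity argument in $|n-m|$) is needed to bridge small and large gaps. Finally, since the RSD rule for $n < m$ is convention-dependent, the conjecture as stated may need to be refined if an alternative picking-sequence rule is used.
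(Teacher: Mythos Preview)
The paper does not prove this statement: Conjecture~\ref{con:lex-dominated} is presented purely as an empirical observation (see Figure~\ref{fig:ld-efficiency} and the surrounding discussion), and in the Discussion section the authors explicitly list ``providing theoretical proofs for the efficiency relations in Conjectures~\ref{con:SDtoZero}, \ref{con:LDtoZero}, and~\ref{con:lex-dominated}'' as an open problem that is ``crucial and not trivial.'' There is therefore no paper proof to which your proposal can be compared.

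Taken on its own terms, your proposal is a strategy outline rather than a proof, and you already identify the two places where it is incomplete. The $m \gg n$ branch is the more promising one: with $n$ fixed and $m \to \infty$, under the paper's RSD convention (the first agent takes $m-n+1$ objects) the probability that agent~$i$ receives her top object under RSD is easily seen to tend to $1/n$, while under PS it tends to~$1$; a union bound over a \emph{fixed} number $n$ of agents then suffices. But the conjecture only asks that $|n-m|$ grow, so you must also treat the case where both $n$ and $m$ diverge with a gap, and there your union bound needs a per-agent failure rate of $o(1/n)$, which you do not establish. The $n \gg m$ branch is more seriously underdeveloped: your claim that ``PS gives diffuse top-ranked shares that RSD cannot match pointwise'' is not obviously true, and in fact with $m$ fixed and $n$ large both mechanisms assign each agent roughly a $1/(\text{number of agents sharing her top choice})$ share of her top object, so any strict top-rank gap would have to come from lower-order terms you have not identified. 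Finally, your last caveat is well taken: the conjecture in the regime $n<m$ is sensitive to the picking-sequence convention in the footnote of Section~\ref{sec:twoRandom}, and any eventual proof would have to make that dependence explicit.
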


One immediate conclusion is that although RSD does not guarantee either \emph{sd}-efficiency or \emph{ld}-efficiency, in most settings with $n \leq m$, especially when $n=m$, neither of the two mechanisms is preferred in terms of efficiency. Hence, one cannot simply rule out the RSD mechanism.

\begin{figure*}[h]
\centering
\begin{subfigure}[b]{0.49\textwidth}
				\includegraphics[width=\textwidth]{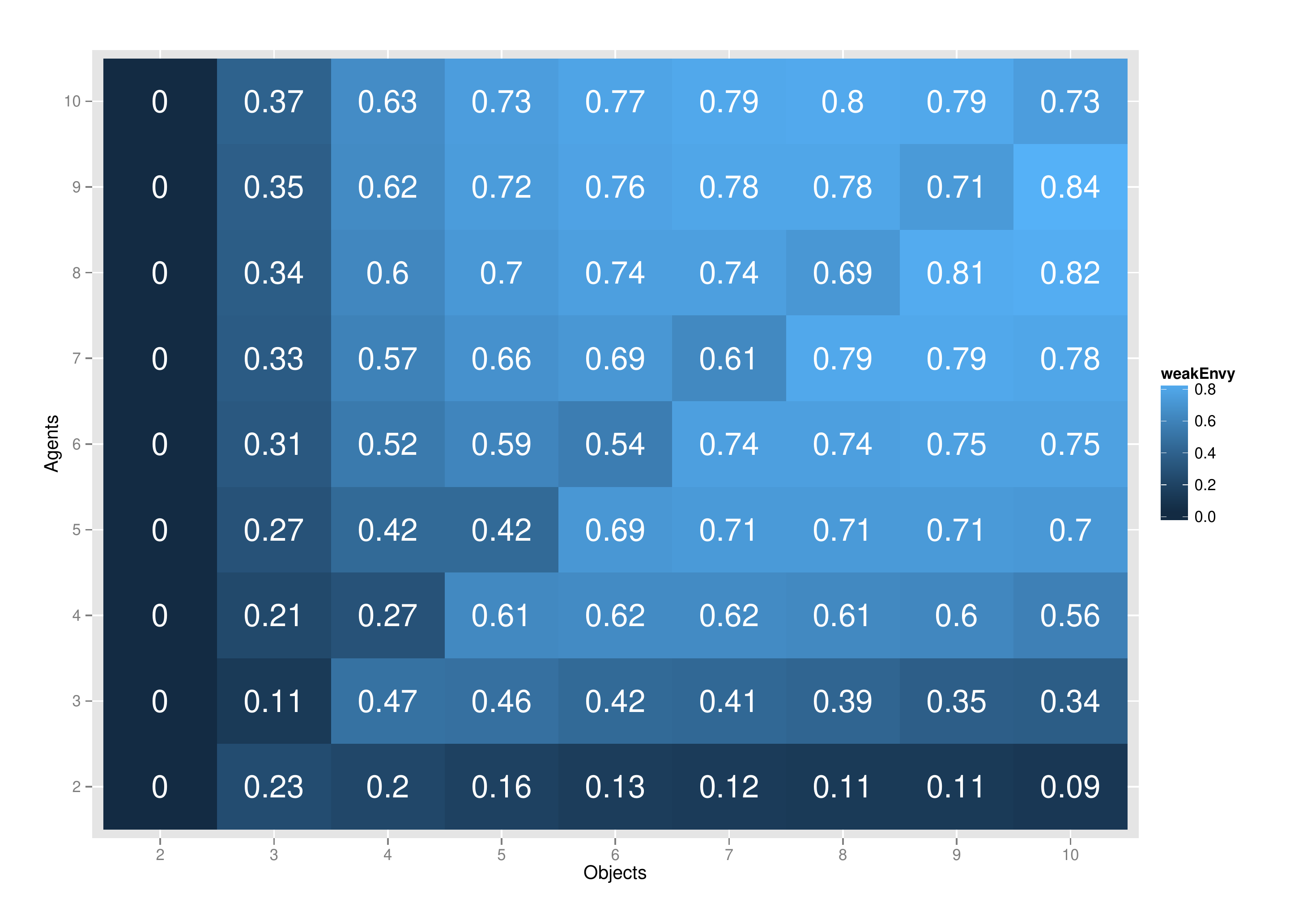}
                \caption{A heatmap showing the percentage of envious agents.}
                \label{Fig:a}
\end{subfigure}~
\begin{subfigure}[b]{0.49\textwidth}
\centering
\includegraphics[width=\linewidth]{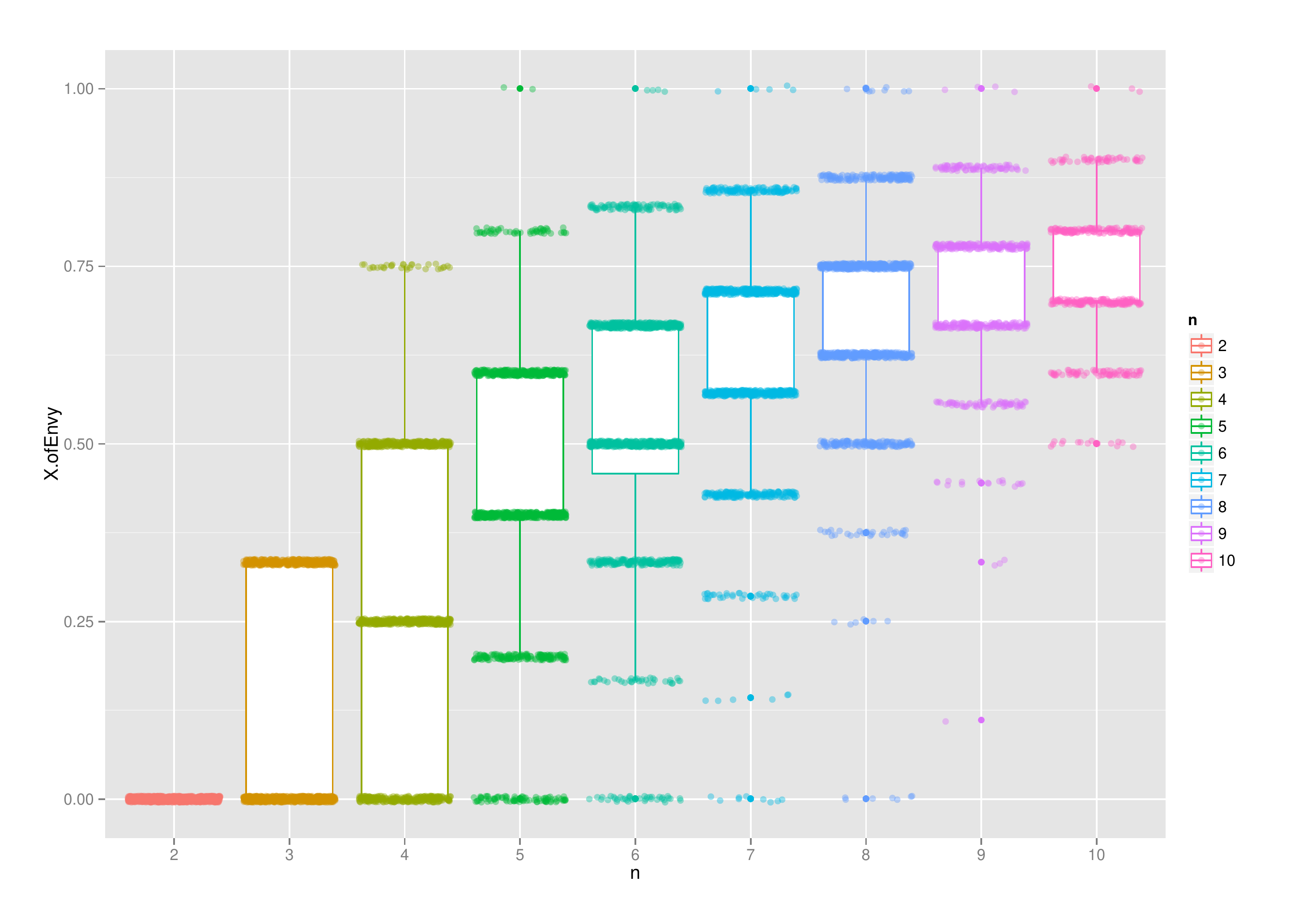}
\caption{Boxplots showing the various envy profiles for $n = m$. The Y axis represents the percentage of envious agents.}
\label{Fig:EnvyEqual}
\end{subfigure}
\caption{Plots representing the percentage of (weakly) envious agents under RSD.}
\label{Fig:RSD-Envy}
\end{figure*}
\begin{figure*}
\centering
\begin{subfigure}[b]{0.5\textwidth}
				\centering
				\includegraphics[width=\textwidth]{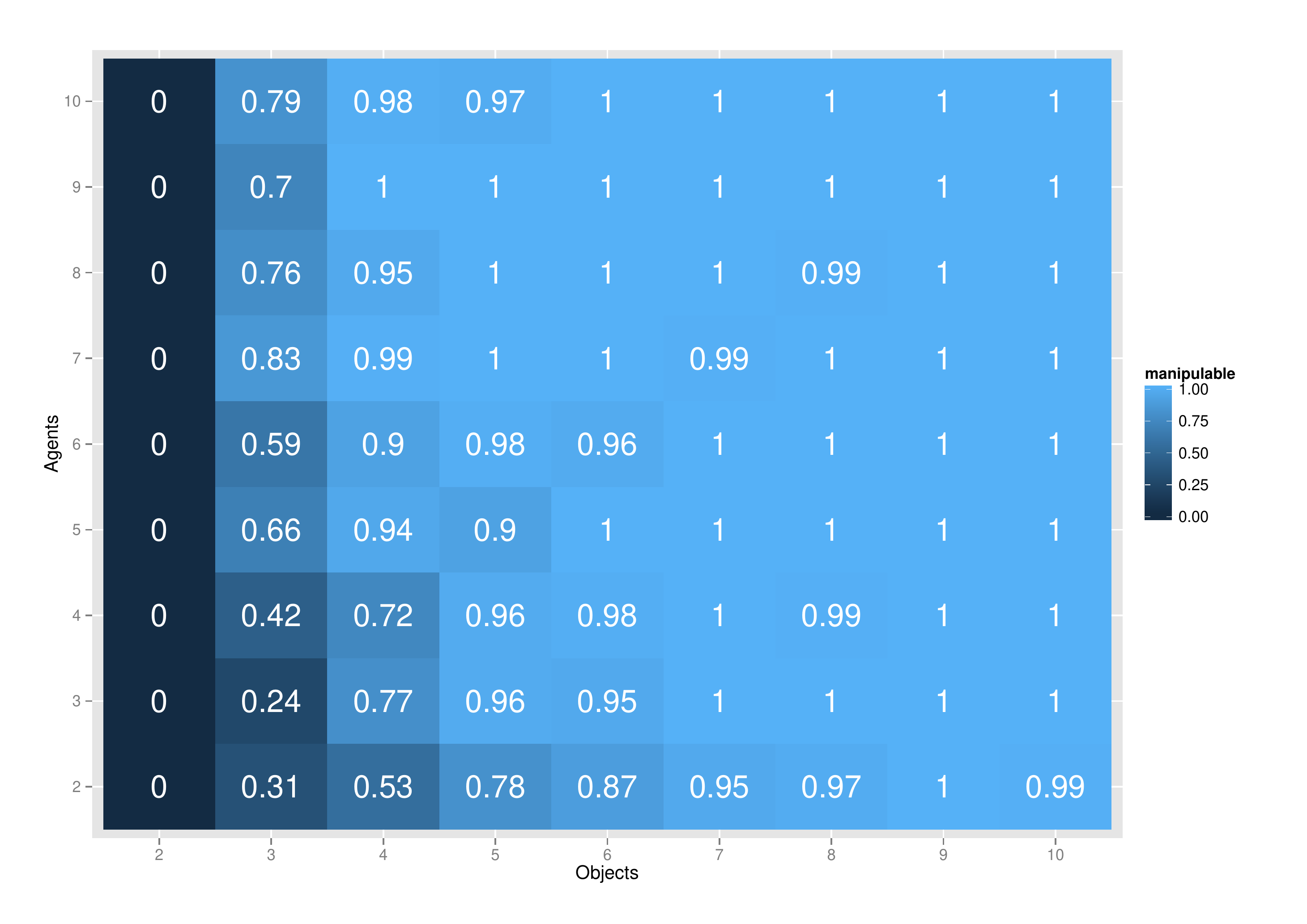}
                \caption{The fraction of manipulable preference profiles under PS.}
                \label{fig:manip}
\end{subfigure}~
\begin{subfigure}[b]{0.5\textwidth}
				\includegraphics[width=\textwidth]{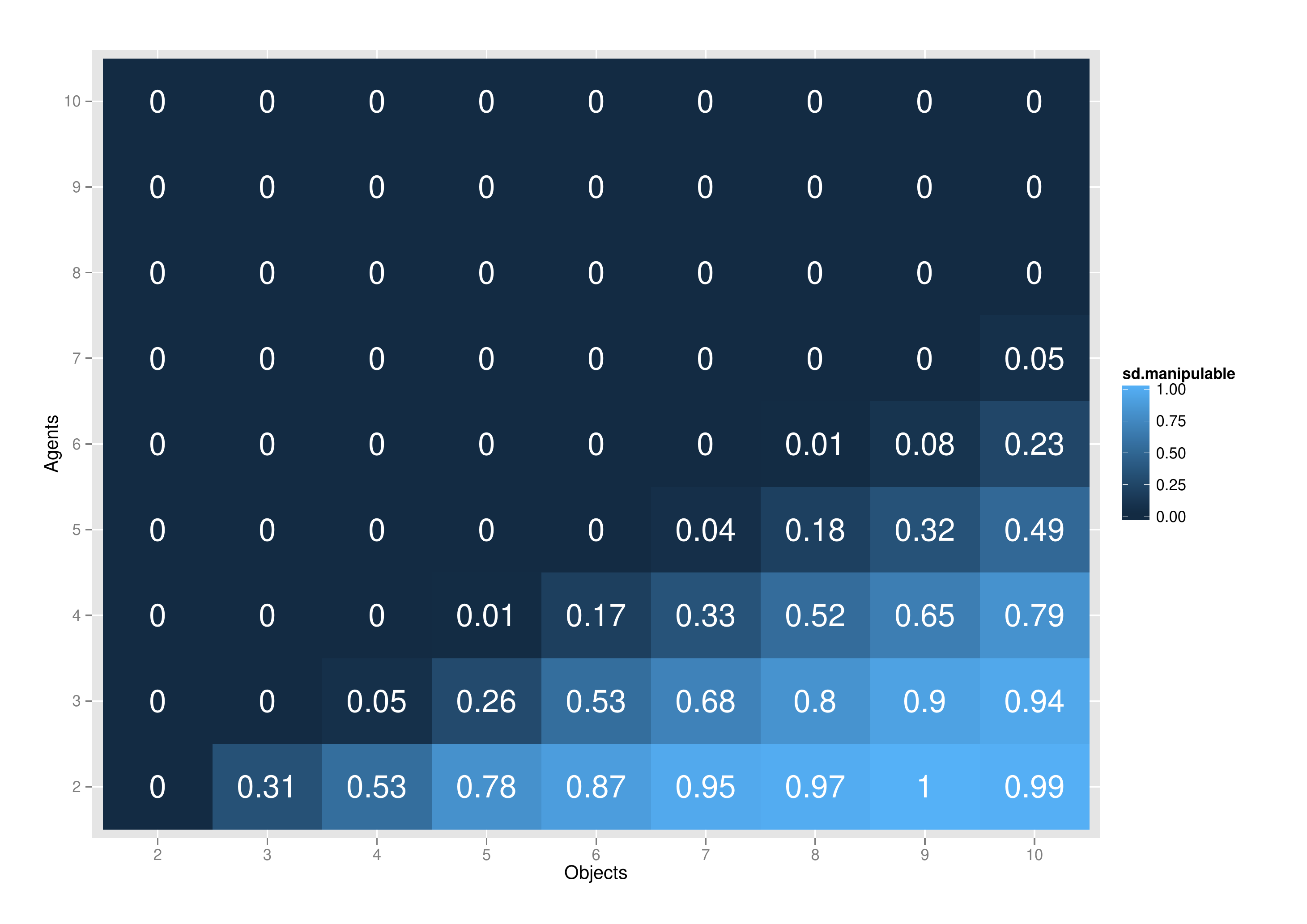}
                \caption{The fraction of \emph{sd}-manipulable preference profiles under PS.}
                \label{fig:sd-manipulable}
\end{subfigure}
\caption{Heatmaps illustrating the manipulablity of PS.} 
\label{Fig:manipulability}
\end{figure*}

\subsection{Envy in RSD}


In our next experiment we measure the fraction of agents that are weakly envious of at least one another agent. Each data point represents the fraction of (weakly) envious agents. 


Figure \ref{Fig:RSD-Envy} shows that for RSD, the percentage of agents that are weakly envious increases with the number of agents. 
Figure~\ref{Fig:a} reveals an interesting observation: fixing any $n>3$, the percentage of agents that are (weakly) envious grows with the number of objects, however, there is a sudden drop in the percentage of envious agents when there are equal number of agents and objects. 

For better understanding of the population of agents who feel (weakly) envious under RSD, we illustrate the envy distribution over the set of preference profiles for each $n=m$ (Figure \ref{Fig:EnvyEqual}). One observation is that there are few distinct envy profiles at each $n$, each representing a particular class of preference profiles, and by increasing $n$, the fraction of agents that are envious of at least one other agent increases.


\subsection{Manipulability of PS}

In Section~\ref{sec:lex}, we characterized the incentive properties of the PS mechanism with regards to the stochastic dominance and lexicographic dominance relations, arguing that although for $n\geq m$ PS is weakly \emph{sd}-strategyproof and \emph{ld}-strategyproof, when $n < m$ PS no longer satisfies these two properties.\footnote{A recent experimental study on the incentive properties of PS shows that human subjects are less likely to manipulate the mechanism when misreporting is a Nash equilibrium. However,  subjects' tendency for misreporting is still significant even when it does not improve their allocations~\cite{hugh2013experimental}. Hence, the PS mechanism suffers from incentive properties. In fact other mechanisms, such as the Draft mechanism, has shown to be highly susceptible to manipulation in many real-life markets such as course allocation at business schools~\cite{budish2012multi}.}
Nonetheless, we are interested in gaining insights on the fraction of preference profiles at which PS is manipulable.


Figure~\ref{fig:manip} shows that the fraction of manipulable preference profiles goes to 1 as $n$ or $m$ grow. In fact, PS is almost 100\% manipulable for $n > 5, m > 5$ (obviously PS is \emph{sd}-strategyproof when $m=2$.). 
Another compelling observation is that for all $n < m$, the fraction of \emph{sd}-manipulable preference profiles goes to 1 as $m - n$ grows (Figure~\ref{fig:sd-manipulable}). This states that in problems where the number of objects are larger than the number of agents, an agent can strictly benefit from misreporting her preferences. 
Figure~\ref{fig:ld-manipulable} shows that under lexicographic preferences, the fraction of \emph{ld}-manipulable preference profiles converges to 1 even more rapidly.


\begin{figure}
\centering
\includegraphics[width=.6\linewidth]{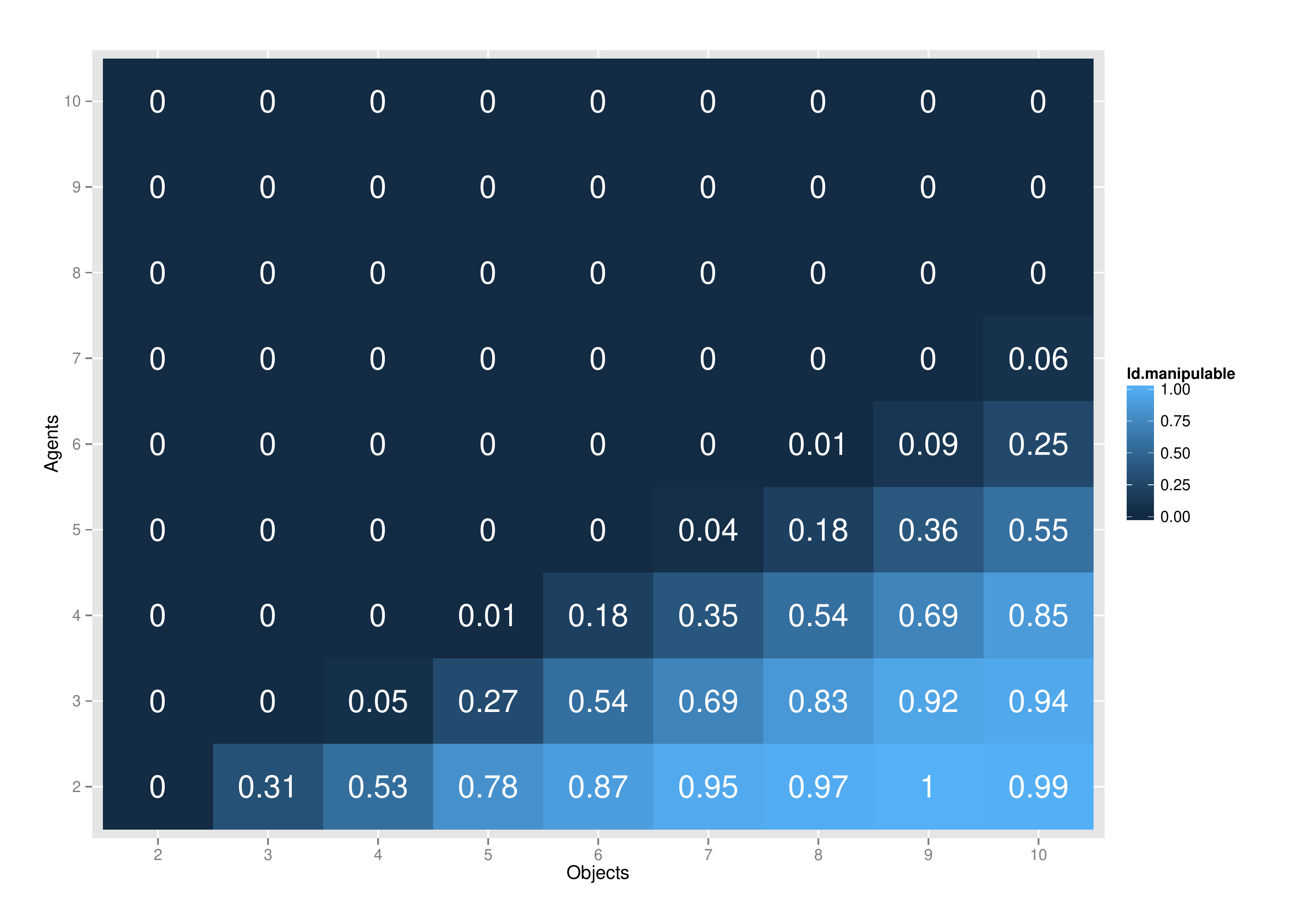}
\caption{The fraction of \emph{ld}-manipulable profiles under PS.}
\label{fig:ld-manipulable}
\end{figure}

%

\section{Discussion}

In this paper we conducted a comparison study of RSD and PS in order to gain further insight into their respective properties. We argued that outcomes generated by the mechanisms can be incomparable, making it difficult to determine which is the better mechanism in different settings. We then looked at some theoretical properties for the mechanisms under the assumption that agents have lexicographic preferences. Finally, we conducted an empirical comparison on the mechanisms. 

Earlier work had shown that RSD is \emph{sd}-strategyproof, ex post efficient and guarantees weak \emph{sd}-envyfreeness. We were able to further the theoretical understanding of the mechanism by showing that it is also \emph{ld}-envyfree. Furthermore, our empirical work showed that the actual fraction of outcomes where PS stochastically dominates RSD goes to 0 as $\frac{n}{m}$ goes to 1, raising the question as to whether the lack of \emph{sd}-efficiency guarantees for RSD is a significant concern in practice. 
In contrast, PS is \emph{sd}-efficient, \emph{sd}-envyfree and \emph{ld}-strategyproof for $n \geq m$. However, when $n < m$, the fraction of \emph{sd}- and \emph{ld}-manipulable assignments approaches 1, limiting the effectiveness of PS. 

Given the theoretical and empirical results, we suggest that in many cases RSD is, perhaps, a more suitable matching mechanism to consider, especially when $n \leq m$. This confirms previous studies on settings where there was an equal number of agents and objects~\cite{papai2001strategyproof,ehlers2003coalitional,hatfield2009strategy}.
We further strengthen this argument by providing an observation:
At preference profiles where PS and RSD induce identical assignments, RSD is \emph{sd}-efficient, \emph{sd}-envyfree, and \emph{sd}-strategyproof. However, PS may still be manipulable. 
\begin{example}
Consider the following preference profile $\succ = ((bca), (cab), (bca))$. Table~\ref{tab:equal} shows the induced random assignment.
\begin{table}
\small
\centering
 \begin{tabular}{ c c c c}
    \hline
     & $a$ & $b$ & $c$ \\ \hline\hline
    $A_1$ & $1/3$ & $1/2$ & $1/6$ \\
    $A_2$ & $1/3$ & $0$ & $2/3$ \\
    $A_3$ & $1/3$ & $1/2$ & $1/6$ \\
    \hline
  \end{tabular}
  \caption{Assignment under $PS(\succ) = RSD(\succ)$}
  \label{tab:equal}
\end{table}
In this case, with PS as the matching mechanism, agent 1 can misreport her preference as $\succ'_{1}=(cba)$, and manipulate her assignment to $1/4(b), 1/2(c), 1/4(a)$. It is easy to see that agent 1's misreport improves her expected outcome for all utility models where $\frac{2}{6}u_{1}(c) > \frac{1}{4}u_{1}(b) + \frac{1}{12}u_{1}(a)$ (for example utilities $10, 9, 0$ for $b,c,a$ respectively.).
\end{example}

This work raises several interesting questions. From the theoretical perspective, among these open problems, providing theoretical proofs for the efficiency relations in Conjectures~\ref{con:SDtoZero}, \ref{con:LDtoZero}, and \ref{con:lex-dominated} is crucial and not trivial.
From the practical view, an interesting open problem is to see if there exists any randomized \emph{ld}-efficient mechanism that satisfies strategyproofness and some notion of fairness (such as proportionality) for $n < m$. Lastly, it would be interesting to investigate if there is an extension to the PS mechanism (or its variant based on different eating speeds) that is potentially weakly \emph{sd}-strategyproof for assigning multiple objects to each agent for $n < m$.


\bibliographystyle{plain}
\bibliography{ref}

\end{document}